\pgfplotsset{compat=1.7}
\newtheorem{problem*}{Problem}
\newtheorem{theorem}{Theorem}
\newtheorem{definition}{Definition}
\newtheorem{proposition}{Proposition}
\newtheorem*{example*}{Example}
\newcommand{\cM}{\mathcal{M}}
\newcommand{\minimize}[1]{\underset{{#1}}{\text{minimize}}}
\newcommand{\bracketit}[1]{\left[{#1}\right]}
\newcommand{\braceit}[1]{\left({#1}\right)}
\newcommand{\st}{\text{subject to}}
\newcommand{\setcard}[1]{\scalebox{0.5}{$|#1|$}}
\newcommand\norm[1]{\left\lVert#1\right\rVert}
\DeclareMathOperator{\Tr}{Tr}
\newcommand{\se}[1]{\text{s}(#1)}
\newcommand{\re}[1]{\text{r}(#1)}
\definecolor{maincolor}{HTML}{032F99}
\definecolor{blue}{RGB}{31,64,122}
\definecolor{red}{HTML}{e05a87} 
\newcommand{\oset}[3][0ex]{%
  \mathrel{\mathop{#3}\limits^{
    \vbox to#1{\kern-2\ex@
    \hbox{$\scriptstyle#2$}\vss}}}}
\newcommand{\optimal}[1]{\oset{\scalebox{.5}{$\star$}}{#1}\!}
\newcommand{\loss}{{\scriptstyle\Delta}c}
\LetLtxMacro\orgvdots\vdots
\LetLtxMacro\orgddots\ddots
\DeclareRobustCommand\vdots{%
  \mathpalette\@vdots{}%
}
\newcommand*{\@vdots}[2]{%
  % #1: math style
  % #2: unused
  \sbox0{$#1\cdotp\cdotp\cdotp\m@th$}%
  \sbox2{$#1.\m@th$}%
  \vbox{%
    \dimen@=\wd0 %
    \advance\dimen@ -3\ht2 %
    \kern.5\dimen@
    % remove side bearings
    \dimen@=\wd2 %
    \advance\dimen@ -\ht2 %
    \dimen2=\wd0 %
    \advance\dimen2 -\dimen@
    \vbox to \dimen2{%
      \offinterlineskip
      \copy2 \vfill\copy2 \vfill\copy2 %
    }%
  }%
}
\DeclareRobustCommand\ddots{%
  \mathinner{%
    \mathpalette\@ddots{}%
    \mkern\thinmuskip
  }%
}
\newcommand*{\@ddots}[2]{%
  % #1: math style
  % #2: unused
  \sbox0{$#1\cdotp\cdotp\cdotp\m@th$}%
  \sbox2{$#1.\m@th$}%
  \vbox{%
    \dimen@=\wd0 %
    \advance\dimen@ -3\ht2 %
    \kern.5\dimen@
    % remove side bearings
    \dimen@=\wd2 %
    \advance\dimen@ -\ht2 %
    \dimen2=\wd0 %
    \advance\dimen2 -\dimen@
    \vbox to \dimen2{%
      \offinterlineskip
      \hbox{$#1\mathpunct{.}\m@th$}%
      \vfill
      \hbox{$#1\mathpunct{\kern\wd2}\mathpunct{.}\m@th$}%
      \vfill
      \hbox{$#1\mathpunct{\kern\wd2}\mathpunct{\kern\wd2}\mathpunct{.}\m@th$}%
    }%
  }%
}
\pgfplotsset{
    mark repeat*/.style={
        scatter,
        scatter src=x,
        scatter/@pre marker code/.code={
            \pgfmathtruncatemacro\usemark{
                or(mod(\coordindex,#1)==0, (\coordindex==(\numcoords-1))
            }
            \ifnum\usemark=0
                \pgfplotsset{mark=none}
            \fi
        },
        scatter/@post marker code/.code={}
    }
}
\newcommand*{\rom}[1]{\expandafter\@slowromancap\romannumeral #1@}
\title{Differentially Private Convex Optimization\\ with Feasibility Guarantees}
\author{%
    Vladimir Dvorkin\\
    Technical University of Denmark\\
    Lyngby, Denmark \\
    \texttt{vladvo@elektro.dtu.dk} \\
    \And
    Ferdinando Fioretto \\
    Syracuse University \\
    Syracuse, NY, USA \\
    \texttt{ffiorett@syr.edu} \\
    \And
    Pascal Van Hentenryck \\
    Georgia Institute of Technology \\
    Atlanta, GA, USA \\
    \texttt{pvh@isye.gatech.edu} \\
    \AND
    Jalal Kazempour \\
    Technical University of Denmark\\
    Lyngby, Denmark \\
    \texttt{seykaz@elektro.dtu.dk} \\
    \And
    Pierre Pinson\\
    Technical University of Denmark\\
    Lyngby, Denmark \\
    \texttt{ppin@elektro.dtu.dk}
%   David S.~Hippocampus\thanks{Use footnote for providing further information
%     about author (webpage, alternative address)---\emph{not} for acknowledging
%     funding agencies.} \\
%   Department of Computer Science\\
%   Cranberry-Lemon University\\
%   Pittsburgh, PA 15213 \\
%   \texttt{hippo@cs.cranberry-lemon.edu} \\
  % examples of more authors
  % \And
  % Coauthor \\
  % Affiliation \\
  % Address \\
  % \texttt{email} \\
  % \AND 
  % Coauthor \\
  % Affiliation \\
  % Address \\
  % \texttt{email} \\
  % \And
  % Coauthor \\
  % Affiliation \\
  % Address \\
  % \texttt{email} \\
  % \And
  % Coauthor \\
  % Affiliation \\
  % Address \\
  % \texttt{email} \\
}
\begin{document}
\begingroup
\allowdisplaybreaks

\maketitle

\begin{abstract}
This paper develops a novel differentially private framework to solve convex optimization problems with sensitive optimization data and complex physical or operational constraints. Unlike standard noise-additive algorithms, that act primarily on the problem data, objective or solution, and disregard the problem constraints, this framework requires the optimization variables to be a function of the noise and exploits a chance-constrained problem reformulation with formal feasibility guarantees. The noise is calibrated to provide differential privacy for identity and linear queries on the optimization solution. For many applications, including resource allocation problems, the proposed framework provides a trade-off between the expected optimality loss and the variance of optimization results.
\end{abstract}

\section{Introduction}
Differential privacy \citep{Dwork:06} is a rigorous definition of privacy that quantifies and bounds the risk of disclosing sensitive attributes of datasets used in computations. Differentially private algorithms ensure privacy by introducing a calibrated noise to the inputs, outputs, or objectives of computations. It has been successfully applied to a variety of contexts, including histogram queries \citep{li2010optimizing}, census surveys \citep{abowd2018us,Fioretto:cp-19}, linear regression \citep{chaudhuri:2011} and deep learning \citep{abadi2016deep} to name but a few examples. However, their applications to \emph{constrained} optimization problems remains limited, because it is generally hard to certify the feasibility of differentially private optimization solution.  

This paper considers a parametric constrained optimization problems of the form
\begin{align}\label{base_problem}
        \minimize{z}\quad& c(z)\quad
        \st\quad \mathcal{Z}\triangleq\{z\;|\;Az\leqslant b,\; Gz=d\},
\end{align}
with variables $z \in \mathbb{R}_+^{n}$, convex cost function $c:\mathbb{R}^{n}\mapsto\mathbb{R}$, and convex, compact and non-empty feasible space $\mathcal{Z}$ with parameters $A\in\mathbb{R}^{m\times n}$, $b\in\mathbb{R}^{m}$, $G\in\mathbb{R}^{\ell\times n}$, and $d\in\mathbb{R}^{\ell}$, with $m > 0$ and $\ell > 0$. The paper assumes elements $c, A, b$ and $G$ as public, non-sensitive information about the system design, whereas vector $d=\{d_{i}\}_{i=1}^{\ell}$ contains private, sensitive data of every user $i$, e.g. the customer loads in an electrical power system. In such applications, the feasible space $\mathcal{Z}$ encodes hard operational constraints or physical laws that must be satisfied. 

Releasing queries over the solutions of problem \eqref{base_problem} may leak information about the sensitive data $d$. For example, in energy network operations, releasing the nodal energy supplies using identity queries, or aggregated supply quantities using sum queries, exposes the allocation of energy demand $d$ across the network \citep{zhou2019differential}. Therefore, the goal of this work is to compute such solution $z$ that makes queries over $z$ differentially private, while also being feasible for problem constraints. 

While there exist various differential privacy algorithms to solve convex optimization problems, their application to constrained problems is limited because they do not generally guarantee that the privacy-preserving result necessarily satisfies the feasibility conditions. Algorithms based on input perturbation of the sensitive data $d$ modify the feasible space $\mathcal{Z}$ \citep{Dwork:06,fukuchi2017differentially},
thus returning an approximate solution to \eqref{base_problem} that may not satisfy the original constraints. The output perturbation mechanisms \citep{chaudhuri2009privacy,rubinstein2012learning}, that add noise to the optimization results, generally cannot be certified feasible for $\mathcal{Z}$; see, for example, the impossibility results of \citet{hsu2014privately}. The feasibility and near-optimality of the privacy-preserving results can be restored by leveraging the post-processing immunity of differential privacy. This, however, requires solving bilevel optimization problems \citep{fioretto:TPS20}.

This paper addresses these limitations and develops a new framework that provides both privacy and feasibility guarantees for constrained optimization problems. Instead of applying the noise to either the parameters or the results of the optimization, the framework solves a stochastic chance-constrained optimization problem whose solution is used to sample a solution to \eqref{base_problem}, which guarantees privacy and ensures feasibility with high probability. The approach requires a linear dependency between the optimization variables and the noise \citep{georghiou2019decision} and reveals a novel connection between differential privacy and stochastic chance-constrained optimization. 
%%%%%%%%%%%%%%%%%%%%%%%%%
\begin{wrapfigure}[12]{R}{0.28\textwidth}
\centering
\vspace{-26pt}
\hspace*{-6pt}
\includegraphics[width=0.28\textwidth]{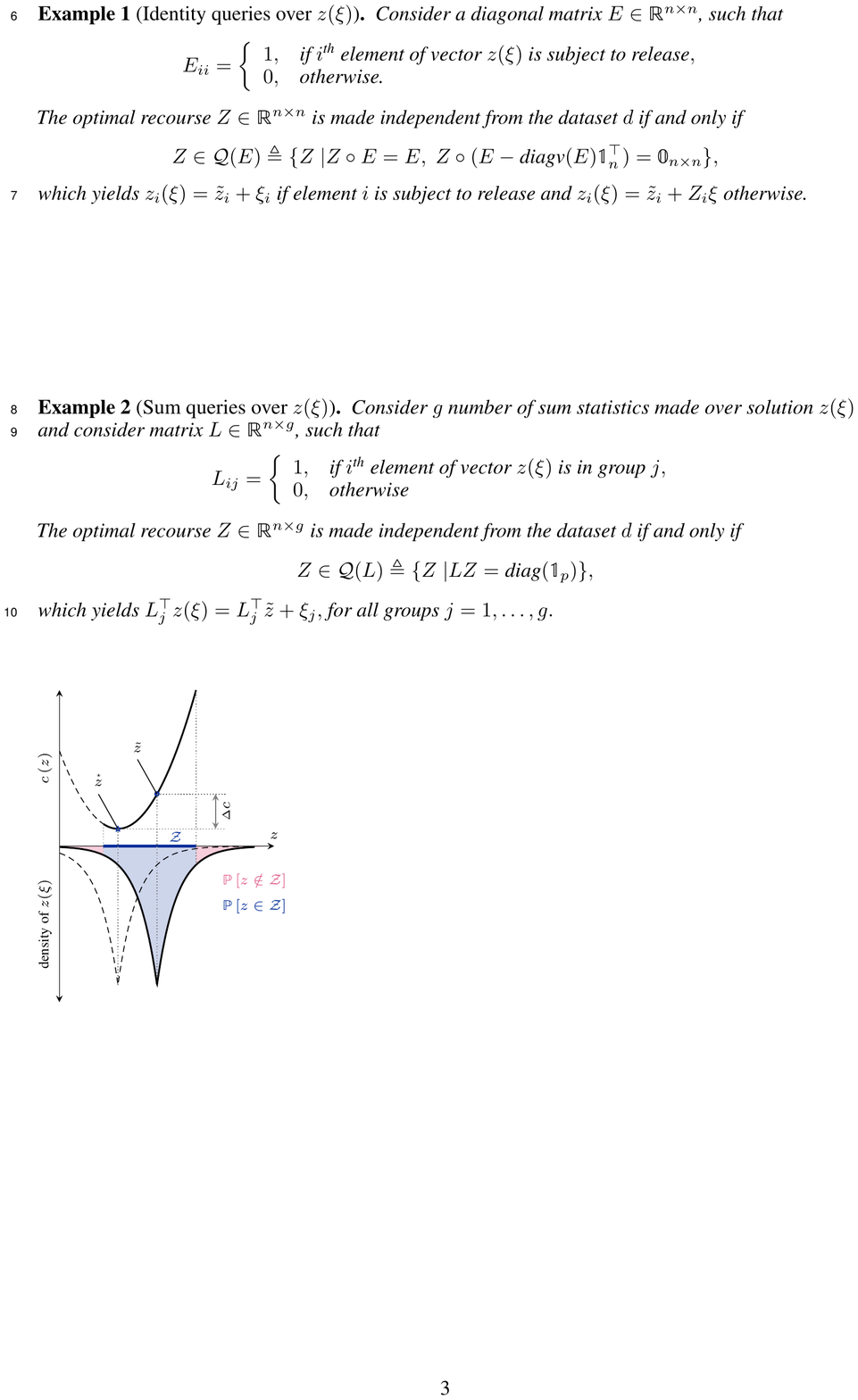}
\end{wrapfigure}
\ \ % keep this two to avoid the error of wrapfigure 

%%%%%%%%%%%%%%%%%%%%

The functioning of the framework is illustrated in the adjacent figure, showing the projections of solutions $z$ onto cost function $c$ and feasible space $\mathcal{Z}$. 
Consider the optimal solution $\optimal{z}\;$ returned by problem \eqref{base_problem}. The output perturbation of $\optimal{z}\;$ results in solutions whose density $\optimal{z}(\xi)$ (dashed line on the bottom of the figure) is prone to lie outside the feasible space $\mathcal{Z}$. 
By restricting the optimization variables to be a linear function of the noise, e.g., $z(\xi)=\tilde{z}+f(\xi)$ where $\tilde{z}$ is the expected value of the solution with respect to the random variable distribution and $f(\xi)$ is the linear functional recourse, the stochastic problem optimizes $\tilde{z}$ and $f(\xi)$ providing a new probability density of solution $z$ (solid line). This new density renders any realization of the noise $\xi$ feasible within a \emph{prescribed} probability $\mathbb{P}\bracketit{\tilde{z}+f(\xi)\in\mathcal{Z}}$, specified by the curator of problem \eqref{base_problem}. If the functional recourse $f(\xi)$ is made independent from the sensitive data $d$, the framework enjoys both privacy and feasibility guarantees. However, it introduces a trade-off between privacy and the optimality loss with expected value  $\loss=\mathbb{E}[c(z(\xi)) - c(\optimal{z}\;)]$.

The chance-constrained optimization always ensures the satisfaction of the system $G z = d$, which may represent flow conservation constraints and other physical laws that cannot be violated. %in any circumstances. 
This setting, however, restricts the queries to be made only on strict subsets of solution $z$, because the perturbation needs to be redistributed among the variables to guarantee the feasibility of the system. By separating the solution into released and non-released variables, the latter can be optimized to provide trade-offs between the optimality loss and its variance, as well as the trade-offs between the optimality loss and the overall solution variance. The contributions of this work can be summarized as follows:
%
%\noindent\textbf{Contributions:} 
{\bf (1)} It develops a novel differentially private framework for the release of identity and sum queries over the solutions of constrained convex programs, using stochastic chance-constrained optimization (Section \ref{sec:algorithm}).
{\bf (2)} The released solutions are guaranteed feasible with a high probability. The feasibility guarantees are studied for individual and joint constraint satisfaction, providing higher or lower optimality losses, respectively (Section \ref{sec:ref_and_guarant}). 
{\bf (3)} The framework establishes a trade-off between the expected and the worst-case errors by controlling the variance of the optimization results and the optimality loss (Section \ref{sec:variance_control}).
{\bf (4)} On the benchmark energy optimization datasets, the framework is shown to outperform the standard output perturbation algorithm (Section \ref{sec:experiments}).  

\noindent\textbf{Notation} Upper and lower case symbols are used to denote, respectively, matrices and vectors. The indexed notation $A_i$ is used to denote the $i^{\text{th}}$ row vector of matrix $A$. The operator $\text{diag}(a)$ returns the diagonal matrix with entries of vector $a$, and $\text{diagv}(A)$ returns the vector of diagonal elements of matrix $A$. The ceil $\lceil r \rceil$ maps real number $r$ into the least succeeding integer. Notation $\circ$ denotes a Schur product. 
$\mathbb{0}$ and $\mathbb{1}$ respectively denote vectors of zeros and ones of proper dimensions.

\section{Preliminaries}\label{sec:preliminaries}
Across the paper, it is assumed that the optimal solution to problem \eqref{base_problem} exists and is unique and that the data $d_i$, contributed by each individual $i$, in $d$ are not correlated. Thus, the problem can be seen as an algorithm
$
\cM: \mathbb{R}^{\ell}\mapsto\mathbb{R}^{n}
$
with a unique mapping of datasets $d$ to optimization results. To enable private queries over the optimization results, this work considers a differentially private counterpart $\tilde{\mathcal{M}}$ of $\mathcal{M}$. While the traditional differential privacy definition aims at protecting the \emph{participation} of an individual data \citep{Dwork:06}, %for the constrained optimization problems of interest, 
this work focuses on obfuscating the \emph{magnitude} $d_i$ associated with participant $i$ of the input vector $d$. To capture this privacy notion, the paper focuses on the \emph{indistinguishability} framework proposed by \citet{chatzikokolakis2013broadening}, which protects the sensitive data of each individual up to some measurable quantity $\alpha > 0$ and defines two neighboring datasets $d, d'$ (written $\sim_\alpha$) as 
\begin{align*}
    d\sim_{\alpha}d' \Longleftrightarrow \exists i\;\text{s.t.}\;
    | d_{i}-d_{i}' | \leqslant \alpha \;\wedge\;d_{j}=d_{j}',\forall j \neq i,
\end{align*}
where $d$ and $d'$ are input vectors to problem \eqref{base_problem} and $\alpha$ is a positive real value. Following previous work \citep{munozprivate}, this relation requires the assumption that the neighboring datasets are feasible for problem \eqref{base_problem}, which is not restrictive, as only feasible solutions are of interest to release.

Differential privacy requires that the maximum divergence of the algorithm output distributions on neighboring inputs to be bounded by privacy parameters $\varepsilon$ and $\delta$, such that 
\begin{align*}
    \mathbb{P}\bracketit{\tilde{\mathcal{M}}(d) = O} \leqslant 
    \mathbb{P}\bracketit{\tilde{\mathcal{M}}(d')= O}\exp(\varepsilon) + \delta
\end{align*}
for a random algorithm $\tilde{\mathcal{M}}$ and any output $O$, where $\mathbb{P}$ denotes the probability over runs of $\tilde{\mathcal{M}}$.
If $\delta =0$, $\tilde{\mathcal{M}}$ is said to be $\varepsilon$-differentially private. 

The \emph{global sensitivity} methods are known to provide differential privacy by augmenting the output of computations with the noise  calibrated to the $\ell_1$- or $\ell_2$-sensitivity. 
The $\ell_{p}-$sensitivity 
$$\Delta_\alpha \triangleq \max_{d \sim_\alpha d'}\norm{\mathcal{M}(d)-\mathcal{M}(d')}_{p}, \quad p = 1,2,$$
is used to bound the change in the algorithm output induced by any two $\alpha$-indistinguishable inputs. 
In many applications of interest, 
$G\in\{0,1\}^{\ell\times n}$ is a binary matrix and the domain of datasets $d$ is normalized in $[0,1]$. Thus, $\Delta_{\alpha}$ is directly upper-bounded by $\alpha$.

%For constrained optimization problems, however, it is generally difficult to conclude how the sensitive inputs affect the optimization solution. \citet{han2014differentially} suggested approximating $\Delta_{\alpha}$ by the maximum diameter of the Löwner-John ellipsoid that contains a convex hull of $\mathcal{Z}$, however this approximation can be over-conservative and requires solving a large-scale semi-definite program with a limited computational tractability \citep[Section 3.7.2.2]{ben2001lectures}. Alternatively, the sensitivity $\Delta_{\alpha}$ can be reasoned from the problem design: if $G\in\{0,1\}^{\ell\times n}$ is a binary matrix, $\Delta_{\alpha}$ can be upper-bounded by $1\cdot\alpha$, provided that the domain of neighboring datasets $d$ is normalized in $[0,1]$. In the applications considered in this work, $\Delta_{\alpha}$ can be upper-bounded by $\alpha$. 

Let $\text{Lap}(\lambda)^n$ denote the i.i.d.~Laplace distribution over $n$ dimensions with $0$ mean and scale $\lambda$. The following ubiquitous result provides an $\varepsilon$-differentially private algorithm \citep{Dwork:06}. % under the $\ell_1$ sensitivity.
\begin{theorem}[Laplace mechanism]\label{th:laplace}
    Let $\mathcal{M}$ be an algorithm with $\ell_1$ sensitivity $\Delta_\alpha$ that maps datasets $d$ to $\mathbb{R}^{n}$. The Laplace mechanism $\mathcal{M}(d) + \xi$, with $\xi\sim\text{Lap}(\nicefrac{\Delta_\alpha}{\varepsilon})^{n}$, attains $\varepsilon$-differential privacy. 
\end{theorem}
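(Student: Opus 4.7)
The plan is to prove the theorem by a direct density-ratio argument, which is the standard route for mechanisms based on noise with a product density. Let $\lambda = \Delta_\alpha / \varepsilon$, and let $p_d$ and $p_{d'}$ denote the densities at an output point $O \in \mathbb{R}^n$ of the random variables $\mathcal{M}(d)+\xi$ and $\mathcal{M}(d')+\xi$, respectively, where $\xi \sim \text{Lap}(\lambda)^n$. Because $\xi$ has independent coordinates, both $p_d(O)$ and $p_{d'}(O)$ factor over $i=1,\dots,n$ with each factor equal to $\tfrac{1}{2\lambda}\exp(-|O_i - \mathcal{M}(d)_i|/\lambda)$ (or the analogous expression with $d'$).

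Next I would form the ratio $p_d(O)/p_{d'}(O)$; the normalizing constants cancel and a single exponential remains, giving
\begin{equation*}
\frac{p_d(O)}{p_{d'}(O)} \;=\; \exp\!\left(\frac{1}{\lambda}\sum_{i=1}^{n}\bigl(|O_i-\mathcal{M}(d')_i| - |O_i-\mathcal{M}(d)_i|\bigr)\right).
\end{equation*}
The one substantive inequality in the proof is then applied term-wise: by the reverse triangle inequality, $|O_i-\mathcal{M}(d')_i| - |O_i-\mathcal{M}(d)_i| \leqslant |\mathcal{M}(d)_i - \mathcal{M}(d')_i|$. Summing over $i$ bounds the exponent by $\|\mathcal{M}(d)-\mathcal{M}(d')\|_1 / \lambda$, which by the definition of $\Delta_\alpha$ and the hypothesis $d \sim_\alpha d'$ is at most $\Delta_\alpha / \lambda = \varepsilon$. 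Hence the ratio of densities is bounded by $\exp(\varepsilon)$ pointwise.

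To conclude, I would integrate this pointwise bound over an arbitrary measurable output set $O \subseteq \mathbb{R}^n$ to pass from densities to probabilities, obtaining $\mathbb{P}[\tilde{\mathcal{M}}(d)\in O] \leqslant \exp(\varepsilon)\,\mathbb{P}[\tilde{\mathcal{M}}(d')\in O]$, i.e.\ the definition of differential privacy with $\delta = 0$ stated in the preliminaries.

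There is no real obstacle here; the only thing that requires care is ensuring that the $\ell_1$ norm is the right norm to match the Laplace product density, which is precisely why the theorem is stated in terms of $\ell_1$ sensitivity. If the sensitivity were instead measured in $\ell_2$, the triangle-inequality step above would fail to close, and one would have to switch to the Gaussian mechanism and accept a nonzero $\delta$; flagging this is worthwhile as a sanity check that the hypothesis has been used essentially.
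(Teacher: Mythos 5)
Your proof is correct and is the standard density-ratio argument: the paper does not actually prove Theorem~\ref{th:laplace} itself (it is stated as a known result with a citation to Dwork et al.), but the factorized Laplace density, the term-wise triangle inequality, and the final bound via the $\ell_1$-sensitivity are exactly the steps the paper deploys in its own Appendix proofs of Theorems~\ref{th:identity_query} and~\ref{th:affine_transofrmation_query}, so your argument matches the paper's approach in all essentials. Your closing remark about why $\ell_1$ (rather than $\ell_2$) sensitivity is the right hypothesis is a sound sanity check and consistent with the paper's separate treatment of the Gaussian case in Theorem~\ref{th:gaussian_mech}.
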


% \section{Global sensitivity chance constrained optimization framework}\label{sec:algorithm}
\section{Internalizing global sensitivity methods into constrained optimization}\label{sec:algorithm}
% \nandoSide{Let's think at a title that stays in 1 line}
% \vladSide{Okay. I suggest this more general and self-containing title.}

The direct application of Theorem \ref{th:laplace} to the optimal optimization solution may produce a result that violates the problem constraints. This section introduces a suitable transformation of problem \eqref{base_problem} into a stochastic chance-constrained problem that internalizes the global sensitivity methods to establish both privacy and constraint feasibility guarantees. This section first discusses the Laplace mechanism and then extends the results to the Gaussian mechanism. 
% \footnote{While the results are discussed for the Laplace mechanism, the proposed method extends to provide $(\varepsilon,\delta)$-differential privacy using the Gaussian mechanism \citep{Dwork:06}.}
% \nandoSide{specify where the Gaussian mechanism is discussed in the Appendix} 

Consider a random perturbation $\xi\in\mathbb{R}^{p}$ calibrated to Laplace distribution $\mathbb{P}_{\xi} = \text{Lap}(\nicefrac{\Delta_{\alpha}}{\varepsilon})^p$ for some arbitrary dimension $p$, and assume that the solution $z$ depends on the realization of $\xi$ as
\begin{align}\label{rand_sol}
    z(\xi)= \tilde{z} + f(\xi) = \tilde{z} + Z\xi,
\end{align}
where $\tilde{z}\in\mathbb{R}^{n}$ is the expected value of the solution with respect to distribution $\mathbb{P}_{\xi}$ and $Z\xi$ is the linear functional recourse with recourse decision $Z\in\mathbb{R}^{n\times p}$, which is used to adjust the expected solution to any realization of $\xi$. Therefore, any query made over $z(\xi)$ will constitute the expected and random components. To provide privacy guarantees, the random component is required to be  independent from data $d$. This can be achieved by enforcing additional, query-specific, constraints $\mathcal{Q}$ on the recourse $Z$. While the framework can accommodate the general class of linear queries over solution $z(\xi)$, for ease of presentation, this work focuses on identity and sum queries.

\begin{definition}[Identity query]\label{def:identity_query}\normalfont
This query releases a specified subset of solution $z(\xi)$. 
Consider a random perturbation $\xi\in\mathbb{R}^{n}$ and a diagonal matrix $I\in\mathbb{R}^{n\times n}$, such that 
\begin{align*}
    \xi_{i} = \left\{
    \begin{array}{l}
        \text{Lap}(\nicefrac{\Delta_{\alpha}}{\varepsilon}), \\
        0,
    \end{array} \right.
    I_{ii} = \left\{
    \begin{array}{rl}
        1, & \text{if $i^{\text{th}}$ element of vector $z(\xi)$ is subject to release},\\
        0, & \text{otherwise.}
    \end{array} \right.
\end{align*}
The identity query release is thus $Iz(\xi) = I\tilde{z} + IZ\xi$ for $Z\in\mathbb{R}^{n\times n}$. 
The random component $IZ\xi$ is made independent from the dataset $d$ if the recourse decision $Z$ is constrained as follows
$$
Z\in\mathcal{Q}(I)\triangleq\{Z\;|Z\circ I = I,\;Z\circ(I-\text{diagv}(I)\mathbb{1}_{n}^{\top}) = \mathbb{0}_{n\times n}\},
$$
which yields $Iz(\xi)=\left\{
    \begin{array}{ll}
        \tilde{z}_{i} + \xi_{i}, & \text{if $I_{ii}=1$,}\\
        0, & \text{otherwise.}
    \end{array} \right.$
\end{definition}

\begin{definition}[Sum query]\label{def:sum_query}\normalfont This query releases $p$ sum statistics over non-intersecting subsets of $z(\xi)$.  Consider a random perturbation $\xi\in\mathbb{R}^{p}$ and a matrix $S\in\mathbb{R}^{p\times n}$, such that 
\begin{align*}
    S_{ij} = \left\{
    \begin{array}{rl}
        1, & \text{if element $z_{j}(\xi)$ participates in sum statistic $i$},\\
        0, & \text{otherwise.}
    \end{array} \right.
\end{align*}
The sum query releases $Sz(\xi)=S\tilde{z} + SZ\xi$ for $Z\in\mathbb{R}^{n \times p}$. 
The random component $SZ\xi$ is made independent from the dataset $d$ if the recourse decision $Z$ is constrained as follows
$$
Z\in\mathcal{Q}(S)\triangleq\{Z\;|SZ = \text{diag}(\mathbb{1}_{p})\},
$$
which yields $S z(\xi)= S \tilde{z} + \xi$.
%which yields $L_{j}^{\top} z(\xi)= L_{j}^{\top} \tilde{z} + \xi_{j},$ for all sum queries $j=1,\dots,g.$
\end{definition}

% Since multiple sum queries are made over non-intersecting subsets of solution $z$, the parallel composition guarantees that multiple answers do not induce additional privacy loss \citep{Dwork:06}.  

To produce random solutions to \eqref{base_problem}, function \eqref{rand_sol} is optimized using the following stochastic program
% The random solution \eqref{rand_sol} can be certified feasible for \eqref{base_problem} if the following stochastic problem
\begin{subequations}
\label{cc_problem_v2}
\begin{align}
        \minimize{\tilde{z},Z\in\mathcal{Q}}\quad &\mathbb{E}^{\mathbb{P}_\xi}\bracketit{c(\tilde{z} + Z\xi)}\label{cc_problem_v2_obj}\\
        \st\quad &\mathbb{P}_{\xi}\bracketit{A(\tilde{z} + Z\xi)\leqslant b}\geqslant 1-\eta,\label{cc_problem_v2_cc}\\
        &G(\tilde{z} + Z\xi)=d \quad\mathbb{P}_{\xi}\text{-a.s.},\label{cc_problem_v2_asc}
\end{align}
\end{subequations}
which optimizes $\tilde{z}$ and $Z$ by anticipating all realizations of the random variable $\xi$. This problem minimizes the expected value of the convex cost function \eqref{cc_problem_v2_obj} with respect to the random variable $\xi$.
% for the given distribution $\mathbb{P}_\xi$ by optimizing decisions $\tilde{z}$ and $Z$ subject to the query-specific constraints in $\mathcal{Q}$. Note, that the expectation is taken over the convex cost function range, not with respect to the feasible range. 
The problem constraints are given by a set of probabilistic constraints. The joint chance constraint \eqref{cc_problem_v2_cc} requires the satisfaction of the inequality constraints with a {\it prescribed} probability $1-\eta$, specified by the curator of problem \eqref{cc_problem_v2}. The almost sure constraint \eqref{cc_problem_v2_asc} requires the equality constraints to hold with probability 1. Note that, if problem \eqref{cc_problem_v2} is infeasible, it follows that the privacy parameters $\alpha$ and $\varepsilon$ are too strong for the feasibility requirement $\eta$. 

As the recourse of problem \eqref{cc_problem_v2} amounts to finitely-dimensional linear functions, objective function \eqref{cc_problem_v2_obj} and chance constraint \eqref{cc_problem_v2_cc} admit computationally tractable reformulations \citep{ben2009robust} (additional details will be given in Section \ref{sec:ref_and_guarant}). 
The almost sure constraint \eqref{cc_problem_v2_asc} includes a random variable and, therefore, satisfying it is computationally intractable. However, it can be equivalently reformulated using the following set of equations:
\begin{align}\label{balance}
G\tilde{z} = d,\quad GZ = \mathbb{0}.
\end{align}
If variables $\tilde{z}$ and $Z$ are subject to \eqref{balance}, their optimal solution satisfies the equality constraint \eqref{cc_problem_v2_asc} for any realization of $\xi$. The structural properties of $G$ restrict the set of potential queries and a query is said to be implementable if there exists $Z\in\mathcal{Q}$ such that $GZ=\mathbb{0}$ holds.  

\begin{example*}[Flow conservation constraint]\normalfont
Assume $G\in\mathbb{R}^{\ell\times n}$ represents the incidence matrix of a fully connected graph $\mathcal{G}$ (its rank is $n-1$) and that $G z = d$ represents a flow conservation constraint. Consider an identity query $Iz(\xi) = I\tilde{z} + IZ\xi$ and $Z\in \mathcal{Q}(I)$ as in Definition \ref{def:identity_query}. The identity query is implementable if $\Tr[I] < n$, i.e., not all elements of $Z$ are constrained by $\mathcal{Q}(I)$ and $GZ = \mathbb{0}$ holds.
\end{example*}
The constraint $GZ = \mathbb{0}$ plays a critical role: it balances the perturbation between the released and non-disclosed variables. As a result, the query should leave enough degree of freedom to satisfy the equality constraint.
%
% Recall that the queries act on the subset of the solution, and the presence of equality constraints158hinders the release of queries over theentirevectorz(ξ).
This limitation is solely induced by the need to preserve the satisfaction of the equality constraint and it is not seen as a limiting factor for many applications
% , including the one discussed in this work 
(see Section \ref{sec:experiments}). 

\paragraph{Private identity query (PIQ) algorithm} The procedure is summarized in Algorithm \ref{alg:identity}, which takes as inputs the dataset $d$, the $\ell_{1}$-sensitivity of the identity query, the privacy $\varepsilon$ and feasibility $\eta$ requirements, the known covariance $\Sigma$, and the query specification $I$. Upon receiving the optimal chance-constrained solution (line 2), the algorithm draws a sample from the Laplace distribution (line 3) and computes a $(1-\eta)$-feasible solution for problem \eqref{base_problem} (line 4). 
The algorithm returns an $\varepsilon$-differentially private identity query which satisfies problem \eqref{base_problem} constraints with probability $(1-\eta)$.

\begin{theorem}[$\varepsilon$-differentially PIQ]\label{th:identity_query}
Algorithm \ref{alg:identity} is $\varepsilon$-differentially private, i.e., 
\begin{align*}
    \mathbb{P}\bracketit{I\braceit{\optimal{z}(d) + \optimal{Z}(d)\xi} = O}\leqslant\mathbb{P}\bracketit{I\braceit{\optimal{z}(d') + \optimal{Z}(d')\xi} = O}\text{exp}(\varepsilon)
\end{align*}
for any two $\alpha-$neighboring datasets $d$ and $d'$ and output solutions $O$. %The probability $\mathbb{P}$ is taken over runs of Algorithm \ref{alg:identity}.
\end{theorem}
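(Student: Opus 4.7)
The plan is to reduce Algorithm~\ref{alg:identity} to a direct application of the Laplace mechanism (Theorem~\ref{th:laplace}). The key structural observation is that the constraint $Z\in\mathcal{Q}(I)$ forces the released coordinates of $IZ\xi$ to coincide with the Laplace samples themselves, so that the query output decomposes as a deterministic, data-dependent shift plus data-independent noise in precisely the form required by Theorem~\ref{th:laplace}.

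First I would unpack Definition~\ref{def:identity_query}. The relation $Z\circ I=I$ forces $Z_{ii}=1$ whenever $I_{ii}=1$, while $Z\circ(I-\text{diagv}(I)\mathbb{1}_{n}^{\top})=\mathbb{0}$ forces $Z_{ij}=0$ for every $j\neq i$ with $I_{ii}=1$. Hence, for each released index $i$, the $i$-th row of $Z$ is the standard basis vector and $(IZ\xi)_{i}=\xi_{i}$, yielding the exact decomposition
\[
I\braceit{\optimal{z}(d)+\optimal{Z}(d)\xi}=I\optimal{z}(d)+I\xi,
\]
where $I\xi$ has i.i.d.\ $\text{Lap}(\Delta_{\alpha}/\varepsilon)$ entries on the released coordinates, is zero elsewhere, and is independent of $d$ by construction of step~3 of Algorithm~\ref{alg:identity}.

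Second, I would view the composite mapping $d\mapsto I\optimal{z}(d)$ induced by the chance-constrained program~\eqref{cc_problem_v2} as a deterministic algorithm whose $\ell_{1}$ sensitivity on $\alpha$-neighboring inputs is at most $\Delta_{\alpha}$ (the sensitivity value supplied to Algorithm~\ref{alg:identity}). The released quantity then has the form $\mathcal{M}(d)+I\xi$ with $I\xi\sim\text{Lap}(\Delta_{\alpha}/\varepsilon)$ on the $\Tr[I]$ released coordinates and a Dirac mass at $0$ on the remaining ones. A direct coordinate-wise computation of the density ratio (the zero coordinates contribute ratio $1$ and the released coordinates each contribute a factor of at most $\exp(\varepsilon|\optimal{z}_{i}(d)-\optimal{z}_{i}(d')|/\Delta_{\alpha})$), together with the sensitivity bound $\norm{I\optimal{z}(d)-I\optimal{z}(d')}_{1}\leqslant\Delta_{\alpha}$, reproduces the statement of Theorem~\ref{th:laplace} and hence the inequality claimed in the theorem.

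The step I expect to be the main obstacle is rigorously establishing the sensitivity hypothesis $\norm{I\optimal{z}(d)-I\optimal{z}(d')}_{1}\leqslant\Delta_{\alpha}$ for the chance-constrained solution map rather than for the original deterministic program~\eqref{base_problem}, since $\optimal{z}$ is an expected-value variable shaped simultaneously by the almost-sure equality constraint $G\tilde{z}=d$ and by the joint chance constraint~\eqref{cc_problem_v2_cc}. Under the normalization assumptions of Section~\ref{sec:preliminaries} (binary $G$ and $d\in[0,1]$), the bound follows directly from the $\alpha$-neighboring relation through the equality constraint; in the general case one must interpret $\Delta_{\alpha}$ as the genuine $\ell_{1}$ sensitivity of the optimizer of~\eqref{cc_problem_v2} and verify it a priori before invoking Theorem~\ref{th:laplace}.
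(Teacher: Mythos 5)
Your proposal is correct and follows essentially the same route as the paper's own proof: the paper likewise unpacks $Z\in\mathcal{Q}(I)$ to show the released output is $I\optimal{\tilde{z}}(d)+I\xi$ with data-independent Laplace noise, then computes the coordinate-wise density ratio, applies the reverse triangle inequality, and bounds the result by $\exp(\varepsilon\norm{\optimal{\tilde{z}}(d)-\optimal{\tilde{z}}(d')}_{1}/\Delta_{\alpha})\leqslant\exp(\varepsilon)$. The obstacle you flag at the end is real but is also present in the paper's step (v), which silently identifies the $\ell_1$-sensitivity of the chance-constrained solution map $d\mapsto\optimal{\tilde{z}}(d)$ with the $\Delta_{\alpha}$ defined for the original program \eqref{base_problem}; your remark that this must be verified a priori (or follows from the equality constraint $G\tilde{z}=d$ under the normalization assumptions) is, if anything, more careful than the published argument.
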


\paragraph{Private sum query (PSQ) algorithm} The procedure is summarized in Algorithm \ref{alg:linear}, which differs from Algorithm \ref{alg:identity} by the query specification $S$ and the noise dimension.  

\begin{theorem}[$\varepsilon$-differentially PSQ]\label{th:affine_transofrmation_query}
Algorithm \ref{alg:linear} is $\varepsilon$-differentially private, i.e.,  
\begin{align}
    \mathbb{P}\bracketit{S(\optimal{z}(d)+\optimal{Z}(d))\xi = O}\leqslant\mathbb{P}\bracketit{S(\optimal{z}(d')+\optimal{Z}(d'))\xi = O}\text{exp}(\varepsilon)
\end{align}
for any two $\alpha-$neighboring datasets $d$ and $d'$ and output solutions $O$. %The probability $\mathbb{P}$ is taken over runs of Algorithm \ref{alg:linear}.
\end{theorem}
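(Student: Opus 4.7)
The plan is to reduce the statement to a direct application of the Laplace mechanism (Theorem \ref{th:laplace}). The first step is to exploit the query-specific constraint $\mathcal{Q}(S)$ from Definition \ref{def:sum_query}. By construction, any optimal $\optimal{Z}(d)$ returned by \eqref{cc_problem_v2} satisfies $S\optimal{Z}(d) = \text{diag}(\mathbb{1}_{p})$, so the released quantity simplifies, for any realization of $\xi$, as
\begin{align*}
S\bigl(\optimal{z}(d) + \optimal{Z}(d)\xi\bigr) \;=\; S\optimal{z}(d) + S\optimal{Z}(d)\xi \;=\; S\optimal{z}(d) + \xi.
\end{align*}
The key observation is that although $\optimal{Z}(d)$ itself may depend on the private input $d$ in an uncontrolled way, the constraint $\mathcal{Q}(S)$ absorbs this dependence into a fixed identity block, so that the stochastic part of the output is identically $\xi$ irrespective of $d$. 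Thus all the data-dependent information in the output is contained in the deterministic vector $S\optimal{z}(d)$.

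Next, I would identify the deterministic mapping $\mathcal{M}: d \mapsto S\optimal{z}(d)$, given by the composition of the chance-constrained program \eqref{cc_problem_v2} with the sum aggregator $S$, as the algorithm to which Theorem \ref{th:laplace} is applied. The input to Algorithm \ref{alg:linear} already carries the $\ell_{1}$-sensitivity $\Delta_{\alpha}$ of this mapping, which by definition upper-bounds $\norm{\mathcal{M}(d)-\mathcal{M}(d')}_{1}$ over all $\alpha$-neighbors $d \sim_\alpha d'$. Since the perturbation is calibrated as $\xi \sim \text{Lap}(\nicefrac{\Delta_{\alpha}}{\varepsilon})^{p}$ and the released quantity equals $\mathcal{M}(d) + \xi$, Theorem \ref{th:laplace} immediately delivers the claimed bound
\begin{align*}
\mathbb{P}\bracketit{\mathcal{M}(d)+\xi=O} \leqslant \mathbb{P}\bracketit{\mathcal{M}(d')+\xi=O}\exp(\varepsilon)
\end{align*}
for every output $O$ and every pair of $\alpha$-neighboring datasets.

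The main obstacle, and the only place where the argument is not a mechanical invocation of Theorem \ref{th:laplace}, is justifying that the stochastic recourse channel does not leak additional information beyond $S\optimal{z}(d)$. This is why the identity $S\optimal{Z}(d) = \text{diag}(\mathbb{1}_{p})$ is essential: without it, the distribution of $SZ(d)\xi$ could differ between $d$ and $d'$ and spoil the indistinguishability. I would therefore emphasize that $\mathcal{Q}(S)$ is precisely the structural requirement that decouples the randomness from the sensitive data, after which the proof collapses to the standard Laplace argument. No use of the feasibility parameter $\eta$ or of the almost-sure constraint \eqref{cc_problem_v2_asc} is needed for privacy; those only govern constraint satisfaction of the sampled solution and are orthogonal to the differential privacy guarantee.
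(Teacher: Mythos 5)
Your proposal is correct and follows essentially the same route as the paper: both hinge on the observation that $Z\in\mathcal{Q}(S)$ forces $S\optimal{Z}(d)=\text{diag}(\mathbb{1}_{p})$, collapsing the released quantity to $S\optimal{\tilde{z}}(d)+\xi$, after which the bound is the standard Laplace-mechanism calculation. The only cosmetic difference is that you invoke Theorem \ref{th:laplace} as a black box while the paper writes out the density-ratio and reverse-triangle-inequality steps explicitly.
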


\begin{figure}[tp]
\centering
\begin{minipage}{0.47\textwidth}
\begin{algorithm}[H]
        {\bf Input:} $d,\Delta_\alpha, \varepsilon, \eta, I$\\
        $(\optimal{\tilde{z}}\;, \optimal{Z}\;) \gets $
        Solve \eqref{cc_problem_v2} for $Z\in\mathcal{Q}(I)$\\
        $\hat{\xi} \gets$ Sample from $\text{Lap}(\nicefrac{\Delta_\alpha}{\varepsilon})^{n}$\\
        Compute solution to \eqref{base_problem} as $\hat{z}=\optimal{\tilde{z}}\;+\optimal{Z}\hat{\xi}$\\
         \bf{Release:} $I\hat{z}$ 
        % \leIf{$A\hat{z}\leqslant b$}{
        %    \bf{Release:} $I\hat{z}$\\
        %    }{
        %    \bf{Release:} $\bot$ 
        % }
        \caption{Private identity query (PIQ)}
        \label{alg:identity}
        %         {\bf Input:} $d,\Delta,\alpha,\varepsilon,\eta$\\
        %         Solve \eqref{cc_problem_identity} for $\xi\sim\text{Lap}(\frac{\alpha\Delta}{\varepsilon})^{k}$\\
        %         Sample $\hat{\xi}$ from $\text{Lap}(\frac{\alpha\Delta}{\varepsilon})^{k}$\\
        %         {\bf Release:} $\hat{x}=\optimal{x} + \optimal{X}\hat{\xi}$
\end{algorithm}
\end{minipage}
\hspace{0.5cm}\begin{minipage}{0.47\textwidth}
\begin{algorithm}[H]
        {\bf Input:} $d,\Delta_\alpha, \varepsilon, \eta,  S$\\
        $(\optimal{\tilde{z}}\;, \optimal{Z}\;) \gets $
        Solve \eqref{cc_problem_v2} for $Z\in\mathcal{Q}(S)$\\
        $\hat{\xi} \gets$ Sample from $\text{Lap}(\nicefrac{\Delta_\alpha}{\varepsilon})^{p}$\\
        Compute solution to \eqref{base_problem} as $\hat{z}=\optimal{\tilde{z}}\;+\optimal{Z}\hat{\xi}$\\
        \bf{Release:} $S\hat{z}$
        % \leIf{$A\hat{z}\leqslant b$}{
        %    \bf{Release:} $S\hat{z}$\\
        %    }{
        %    \bf{Release:} $\bot$ 
        % }
        % {\bf Input:} $d,\Delta,\alpha,\varepsilon,\eta,\Gamma$\\
        % Solve \eqref{cc_problem_linear} for $\xi\sim\text{Lap}(\frac{\alpha\Delta}{\varepsilon})^{1}$\\
        % Sample $\hat{\xi}$ from $\text{Lap}(\frac{\alpha\Delta}{\varepsilon})^{1}$\\
        % {\bf Release:} $\hat{x}=\mathbb{1}^{\top}\Gamma(\optimal{x}-\optimal{X})\hat{\xi}$
        \caption{Private sum query (PSQ)}
        \label{alg:linear}
\end{algorithm}
\end{minipage}
\end{figure}
%%%%%%%%%%%%
In addition to releasing a privacy-preserving answer with a probabilistic feasibility certificate, the proposed framework also allows to verify the feasibility of the sampled solution $\hat{z}$ without incurring an additional privacy loss. Since the equality constraint holds due to \eqref{balance}, it is sufficient to verify the feasibility of constraint $A\hat{z}\leqslant b$ without accessing the original data $d$. The operation is private by post-processing immunity of differential privacy \citep{dwork2014algorithmic}.

Furthermore, since formulation \eqref{cc_problem_v2} is independent from the distribution of the noise, the framework can accommodate other global sensitivity methods. In particular, the following result holds. 
\begin{theorem}[Gaussian algorithms]\label{th:gaussian_mech} Let $\delta,\varepsilon\in(0,1)$ and let $\Delta_{\alpha}^{2}$ be the $\ell_{2}-$sensitivity.  
Algorithms \ref{alg:identity} and \ref{alg:linear} that calibrate $\xi\in\mathcal{N}(0,\sigma^2)$ to the Gaussian distribution with $\sigma\geqslant \Delta_{\alpha}^{2} \sqrt{2\ln(1.25/\delta)}/\varepsilon$ satisfy $(\varepsilon,\delta)$- differential privacy.
\end{theorem}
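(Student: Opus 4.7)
}
The plan is to reduce the claim to the classical Gaussian mechanism by showing that, under either query constraint set, the quantity released by Algorithms~\ref{alg:identity} and~\ref{alg:linear} decomposes as $q(d)+\xi'$, where $q(\cdot)$ is a deterministic query of $d$ whose $\ell_2$-sensitivity is at most $\Delta_\alpha^2$ and $\xi'$ is a Gaussian vector whose distribution does not depend on $d$.

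First I would unpack the algebra of the release in each case. For the PIQ setting, Definition~\ref{def:identity_query} forces every feasible recourse $\optimal{Z}(d)\in\mathcal{Q}(I)$ to satisfy $I\optimal{Z}(d)=I$, so $I\optimal{Z}(d)\hat\xi = I\hat\xi$ \emph{identically in $d$} for every realization $\hat\xi\sim\mathcal{N}(0,\sigma^2)^n$. Hence the release of Algorithm~\ref{alg:identity} equals $I\optimal{\tilde{z}}\;(d)+I\hat\xi$. Similarly, Definition~\ref{def:sum_query} imposes $S\optimal{Z}(d)=\text{diag}(\mathbb{1}_p)$, so the release of Algorithm~\ref{alg:linear} equals $S\optimal{\tilde{z}}\;(d)+\hat\xi$. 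In both cases the random term is a fixed data-independent Gaussian, and the only $d$-dependent part is a deterministic query $q(d)$.

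Next I would bound the $\ell_2$-sensitivity of $q$. By the hypothesis that $\Delta_\alpha^2$ is the $\ell_2$-sensitivity of the identity/sum query over the (unique) solution of the chance-constrained program, for every pair $d\sim_\alpha d'$ we have $\|q(d)-q(d')\|_2\leqslant\Delta_\alpha^2$. I would then invoke the standard Gaussian mechanism result (Dwork \& Roth, Theorem~A.1): for any deterministic query with $\ell_2$-sensitivity $\Delta$ and any $\varepsilon,\delta\in(0,1)$, adding $\mathcal{N}(0,\sigma^2 I)$ noise with $\sigma\geqslant \Delta\sqrt{2\ln(1.25/\delta)}/\varepsilon$ yields an $(\varepsilon,\delta)$-differentially private mechanism. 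Substituting $\Delta=\Delta_\alpha^2$ gives the desired inequality
\begin{align*}
    \mathbb{P}\bracketit{\tilde{\mathcal{M}}(d)=O}\leqslant\mathbb{P}\bracketit{\tilde{\mathcal{M}}(d')=O}\exp(\varepsilon)+\delta
\end{align*}
for both algorithms. The final release (only the $I$- or $S$-projection of $\hat{z}$) is handled by post-processing immunity, which leaves the $(\varepsilon,\delta)$ guarantee intact when additional non-released coordinates of $\hat{z}$ are discarded.

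The main obstacle is conceptual rather than computational: one has to be careful that although both $\optimal{\tilde{z}}\;(d)$ and $\optimal{Z}\;(d)$ depend on $d$ through the objective and the equality constraint $G\tilde{z}=d$, the recourse dependency is fully absorbed by the query constraints $\mathcal{Q}(I)$ and $\mathcal{Q}(S)$ — this is precisely the step in which the framework buys data independence of the random component. Once that is made explicit, the argument is identical to the proof of Theorem~\ref{th:identity_query}/\ref{th:affine_transofrmation_query} with the $\ell_1$-sensitivity and Laplace calibration swapped for the $\ell_2$-sensitivity and Gaussian calibration, and the conclusion follows.
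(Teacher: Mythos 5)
Your proposal is correct and follows essentially the same route as the paper: the paper's proof likewise first uses the query-specific feasibility conditions $\mathcal{Q}(I)$ and $\mathcal{Q}(S)$ to argue that the random component of the release is independent of the data, and then reduces to the classical Gaussian mechanism applied to the deterministic queries $f(d)=I\optimal{\tilde{z}}(d)$ and $f(d)=S\optimal{\tilde{z}}(d)$ with $\ell_2$-sensitivity $\Delta_\alpha^{2}$. Your write-up is in fact somewhat more explicit than the paper's (which defers the second step entirely to the reference), and your added remark on post-processing for the projection is a harmless refinement.
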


The relation between the feasibility requirement $\eta$ and the privacy parameters $\varepsilon$ and $\delta$ is implicit in the formulation of the chance-constrained problem: the variance of the noise affects the ability to satisfy the problem constraints within the feasibility requirement and vice-versa. To render this relation explicit, Appendix \ref{composition_algorithms} discusses a version of Algorithms \ref{alg:identity} and \ref{alg:linear} that iterates lines 3 to 5 an optimal number $T$ of times to guarantee the release of a feasible solution with probability $1-\mu$, for some $0 < \mu < 1$.

\begin{theorem}[Composition to improve feasibility]\label{th:composition}
% Given feasibility requirement $\eta$, privacy parameter $\varepsilon/T$, and value $0 < \mu < 1$, the \emph{iterative} variants of Algorithms \ref{alg:identity} and \ref{alg:linear} return a feasible and $\varepsilon$-differentially private solution within $T = \lceil \frac{\log(\mu)}{\log(\eta)} \rceil$ steps with probability at least $1-\mu$.
Given feasibility requirement $\eta$, privacy parameter $\varepsilon/T$, and value $0 < \mu < 1$, the \emph{iterative} variants of Algorithms \ref{alg:identity} and \ref{alg:linear} return an $\varepsilon$-differentially private solution that is feasible with probability at least $1-\mu$ within $T = \lceil \frac{\log(\mu)}{\log(\eta)} \rceil$ steps.
\end{theorem}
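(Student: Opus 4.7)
The plan is to combine the standard sequential composition theorem for differential privacy with an independence argument on i.i.d.\ noise samples, separating the proof into a privacy step and a feasibility step.

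\textbf{Privacy.} At iteration $k=1,\dots,T$, the iterative variant draws $\hat\xi^{(k)}\sim\text{Lap}(T\Delta_\alpha/\varepsilon)^n$ (respectively $\text{Lap}(T\Delta_\alpha/\varepsilon)^p$) and forms $\hat z^{(k)}=\tilde z^\star+Z^\star\hat\xi^{(k)}$, where $(\tilde z^\star,Z^\star)$ is the fixed optimizer of \eqref{cc_problem_v2}. By Theorem~\ref{th:identity_query} (respectively Theorem~\ref{th:affine_transofrmation_query}) applied with noise scale $\Delta_\alpha/(\varepsilon/T)$, the release $I\hat z^{(k)}$ (respectively $S\hat z^{(k)}$) is $(\varepsilon/T)$-differentially private. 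The feasibility test $A\hat z^{(k)}\leqslant b$ is post-processing of the already private iterate (the equality part is automatic thanks to \eqref{balance}), so it incurs no additional privacy cost. Sequential composition across the $T$ potential releases therefore yields total privacy loss $T\cdot(\varepsilon/T)=\varepsilon$.

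\textbf{Feasibility.} The chance constraint \eqref{cc_problem_v2_cc} guarantees that, for the optimized pair $(\tilde z^\star,Z^\star)$, each single sample satisfies $\mathbb{P}[A\hat z^{(k)}\leqslant b]\geqslant 1-\eta$, while the equality part holds for every realization of $\hat\xi$ by \eqref{balance}. Since the $\hat\xi^{(k)}$ are drawn independently, $\mathbb{P}[\text{all $T$ iterates are infeasible}]\leqslant\eta^T$. Requiring $\eta^T\leqslant\mu$ and taking logarithms, noting that $\log\eta<0$ flips the inequality, gives $T\geqslant\log(\mu)/\log(\eta)$, and integrality yields the stated $T=\lceil\log(\mu)/\log(\eta)\rceil$.

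\textbf{Main obstacle.} The delicate part is the interaction between the adaptive stopping rule and the privacy accounting: the index of the first feasible iterate depends on $d$ through $(\tilde z^\star,Z^\star)$, so one must avoid under-counting the budget. The cleanest argument is to analyze the whole transcript — the $T$ noisy iterates together with their feasibility tests — as a sequentially composed mechanism. Each iterate contributes $\varepsilon/T$ via Theorem~\ref{th:identity_query} or~\ref{th:affine_transofrmation_query}, every test is post-processing, and so the worst-case composed loss is $\varepsilon$ regardless of where the algorithm actually halts. Once this is settled, the tail bound $\eta^T\leqslant\mu$ and the ceiling calculation are elementary.
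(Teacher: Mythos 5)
Your proof is correct and follows essentially the same route as the paper: the feasibility bound comes from treating the $T$ independent draws as Bernoulli trials with per-trial success probability $1-\eta$, so the all-fail probability is $\eta^T$, and solving $\eta^T\leqslant\mu$ gives $T=\lceil\log(\mu)/\log(\eta)\rceil$ (the paper phrases this via a geometric random variable, but the computation is identical). Your explicit treatment of the privacy side --- per-iterate $\varepsilon/T$ cost from the rescaled Laplace noise, the feasibility test as post-processing, and the transcript-level composition that sidesteps the adaptive stopping issue --- is actually more careful than the paper, which leaves the composition accounting implicit in the choice of noise scale $T\Delta_\alpha/\varepsilon$.
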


\section{Reformulations and feasibility guarantees}\label{sec:ref_and_guarant}
The optimization problem \eqref{cc_problem_v2} is intractable because it constitutes the optimization of a random variable. However, due to the convexity assumption on \eqref{base_problem}, linear functional recourse and known distribution of $\xi$, problem \eqref{cc_problem_v2} admits tractable reformulations. There are several avenues to reformulate the joint chance constraint \eqref{cc_problem_v2} with different degrees of conservatism in terms of expected optimality loss \citep{nemirovski2007convex}. This work provides a conservative joint constraint satisfaction guarantee, using a \emph{sample} approximation, and a less conservative individual constraint satisfaction guarantee, using an \emph{analytic} reformulation. The objective function is reformulated as follows.

\paragraph{Objective function reformulation} Consider a quadratic cost function with first- and second-order coefficients $c_{1}\in\mathbb{R}^{n}$ and $c_{2}\in\mathbb{R}^{n}$, and a \emph{diagonal} covariance matrix $\Sigma=\mathbb{E}[\xi\xi^{\top}]$ with diagonal elements being equal to $\lambda=\Delta_\alpha/\varepsilon$. Then, the objective function \eqref{cc_problem_v2_obj} reformulates as
\begin{align*}
    &\mathbb{E}^{\mathbb{P}_\xi}\bracketit{c_{1}^{\top}(z+Z\xi) + (z+Z\xi)^{\top}\text{diag}(c_{2})(z+Z\xi)}
    = c_{1}^{\top}z + z^{\top}\text{diag}(c_{2})z + \Tr\bracketit{Z^{\top}\text{diag}(c_{2})Z\Sigma},
\end{align*}
which follows from the zero-mean distribution $\mathbb{P}_{\xi}$ and the fact that $\mathbb{E}[\xi\xi^{\top}]=\Sigma.$ Notice that for the affine cost functions, the analytic reformulation of \eqref{cc_problem_v2_obj} reduces to $c_{1}^{\top}z$.

\paragraph{Sample approximation} This approximation substitutes the chance constraint \eqref{cc_problem_v2_cc} with a finite number of deterministic constraints, each enforced on a specific realization of random perturbation \citep{campi2008exact,alamo2010sample,margellos2014road}. This work invokes the sample approximation method from \citep{margellos2014road}, which enforces \eqref{cc_problem_v2_cc} on the vertices of the rectangular sample set extracted from distribution $\mathbb{P}_{\xi}$, i.e., for $\xi\in\mathbb{R}^{p}$
\begin{align}\label{sample_ref}
    \mathbb{P}_{\xi}\bracketit{
    \braceit{A(z+Z\xi)\leqslant b}}\geqslant 1-\eta
    \quad\equiv\quad
    Az \leqslant b - AZ\hat{\xi}^{v},\quad\forall v=1,\dots,2^{p},
\end{align}
where $\hat{\xi}^{v}\in\mathbb{R}^{2^p}$ is the $v^{\text{th}}$ vertex of the extracted sample set. \citet{margellos2014road} show that the joint constraint satisfaction is attained if the number of samples $N$ from $\mathbb{P}_{\xi}$ is properly chosen. 
\begin{theorem}[\citet{margellos2014road}]
    The equivalence \eqref{sample_ref} holds with confidence $(1-\beta)$ if the rectangular set is built upon $S$ samples extracted from $\mathbb{P}_{\xi}$, with $N$ at least as much as 
    $$
    %\textstyle
    N \geqslant \left\lceil
    \frac{1}{\eta}\frac{e}{e-1}\braceit{2p-1+\text{ln}\frac{1}{\beta}}
    \right\rceil.
    $$
    %where $e$ is Euler’s number.
\end{theorem}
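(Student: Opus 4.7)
The plan is to cast the construction of the rectangular hull as a scenario program with $2p$ support constraints and then invoke a standard scenario-approach bound. The key structural observation is that, because $A(z + Z\xi)$ is affine in $\xi$, the set $\{\xi : A(z + Z\xi) \leqslant b\}$ is the intersection of finitely many half-spaces in $\xi$-space and hence convex. Therefore, enforcing $Az \leqslant b - AZ\hat{\xi}^{v}$ at every vertex $\hat{\xi}^{v}$ of an axis-aligned rectangle $R \subset \mathbb{R}^{p}$ is equivalent to enforcing the constraint at every point of $R$. The sample-approximation question thus reduces to: with what confidence can we certify that $\mathbb{P}_{\xi}(\xi \notin R) \leqslant \eta$, where $R$ is the smallest axis-aligned box containing $N$ i.i.d.\ draws from $\mathbb{P}_{\xi}$?

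The first step is to identify the combinatorial complexity of $R$. The box $R$ is fully determined by at most $2p$ of the $N$ samples, namely the coordinate-wise argmin and argmax along each of the $p$ axes. In the language of the scenario approach, this bounds the Helly dimension (the cardinality of a minimal support subsample) by $2p$. The second step is to convert this into a probabilistic guarantee on the unseen mass outside $R$. Writing $V(R) = \mathbb{P}_{\xi}[\xi \notin R]$ for the violation probability and exploiting the fact that $R$ is determined by a subset of size at most $2p$, one bounds $\mathbb{P}^{N}[V(R) > \eta]$ by a sum over all $\binom{N}{2p}$ candidate support subsamples of the probability that a subsample both defines the current box and leaves more than $\eta$-mass outside. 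A Chernoff-style tail estimate on the residual $N - 2p$ draws then yields an inequality of the form $\binom{N}{2p}(1 - \eta)^{N - 2p} \leqslant \beta$, and elementary manipulation using $\binom{N}{2p} \leqslant (eN/2p)^{2p}$ and $(1-\eta)^{N-2p} \leqslant e^{-\eta(N-2p)}$ produces the explicit sample-size formula. The constant $e/(e-1)$ enters in this step as the sharp coefficient obtained when optimizing the auxiliary exponent in the Chernoff bound, as in the tightening of Alamo et al.

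The third step is to address the joint structure: the constraint $A(z+Z\xi) \leqslant b$ is a vector inequality, but since $\{\xi : A(z+Z\xi)\leqslant b\}$ is still convex (intersection of half-spaces), the preceding argument applies verbatim to the \emph{joint} violation event. No union bound over rows of $A$ is needed, which is what keeps the complexity at $2p$ rather than scaling with $m$. The main obstacle I anticipate is making the ``support-subsample'' counting rigorous in the non-generic case where several samples can simultaneously define the same face of $R$ (ties on a coordinate extremum); this is handled either by a measure-theoretic genericity argument under the assumption that $\mathbb{P}_{\xi}$ is absolutely continuous (which holds for the Laplace and Gaussian distributions used in this paper) or by breaking ties with a lexicographic rule. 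With that technicality in hand, the displayed bound on $N$ follows by solving the Chernoff inequality for $N$ and rounding up, which gives exactly $N \geqslant \lceil \tfrac{1}{\eta}\tfrac{e}{e-1}(2p - 1 + \ln\tfrac{1}{\beta})\rceil$.
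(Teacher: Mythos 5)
The paper does not prove this statement at all: it is imported verbatim from \citet{margellos2014road} as a cited result, so there is no in-paper proof to compare against. Judged on its own terms, your reconstruction has the right architecture and matches the argument in the cited source: the set $\{\xi : A(z+Z\xi)\leqslant b\}$ is an intersection of half-spaces in $\xi$, hence convex, so enforcing the constraint at the $2^{p}$ vertices of an axis-aligned box is equivalent to enforcing it on the whole box; the box itself is the solution of a scenario program with $2p$ scalar decision variables (a lower and an upper bound per coordinate); and the confidence statement is a one-sided implication (box captures $1-\eta$ mass with confidence $1-\beta$, therefore the vertex-constrained solution satisfies the chance constraint with that confidence), which you correctly treat as such despite the theorem's loose use of the word ``equivalence.''

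The one genuine gap is in your middle step. The inequality $\binom{N}{2p}(1-\eta)^{N-2p}\leqslant\beta$ is the Calafiore--Campi bound, and combining it with $\binom{N}{2p}\leqslant(eN/2p)^{2p}$ and $(1-\eta)^{N-2p}\leqslant e^{-\eta(N-2p)}$ yields a sample-size condition containing a term of order $2p\ln(N)$ or $2p\ln(e/ \eta)$, not the additive form $2p-1+\ln(1/\beta)$ with the clean coefficient $e/(e-1)$ that appears in the statement. That specific formula comes from the sharper Campi--Garatti bound $\mathbb{P}^{N}[V>\eta]\leqslant\sum_{i=0}^{2p-1}\binom{N}{i}\eta^{i}(1-\eta)^{N-i}$ for a fully-supported convex scenario program with $2p$ decision variables, together with the explicit inversion of that binomial tail due to Alamo et al., which is where $e/(e-1)$ actually originates. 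You gesture at the right attribution but invoke the wrong lemma, so as written the computation would not terminate in the displayed bound. Your handling of ties among coordinate extrema via absolute continuity of $\mathbb{P}_{\xi}$ is fine and is indeed satisfied for the Laplace and Gaussian perturbations used in the paper.
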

Finally, notice that this approximation requires an additional input $\beta$ to Algorithms \ref{alg:identity} and \ref{alg:linear} to accommodate the confidence level of the data curator. 

\paragraph{Analytic reformulation} 

The joint chance constraint \eqref{cc_problem_v2_cc} can be rewritten as a union of individual chance constraints. For some vector $\overline{\eta}\in\mathbb{R}_{+}^{m}$ of individual constraint violation probabilities, the individual chance constraints can be reformulated exactly using second-order cone constraints \citep{ben2001lectures}:
\begin{align}\label{analytic_ref}
%     \textstyle
     A_{i}z \leqslant b_{i} - f(1-\overline{\eta}_{i})\norm{A_{i}Z\Sigma^{\nicefrac{1}{2}}}_{2}, \quad\forall i=1,\dots,m,
\end{align}
where $f(1-\overline{\eta}_{i})$ is a distribution-dependent safety parameter and $\Sigma^{\nicefrac{1}{2}}$ is the lower triangular matrix resulting from the Cholesky factorization of $\Sigma$. For any symmetric and unimodal distribution of $\xi$, $f(1-\overline{\eta}_{i})$ amounts to $(\nicefrac{2}{9\overline{\eta}_{i}})^{\nicefrac{1}{2}}$ if $0\leqslant\overline{\eta}_{i}\leqslant\nicefrac{1}{6}$ (see the result from \cite{van2016generalized}). For the Gaussian distribution of $\xi$, $f(1-\overline{\eta}_{i})$ amounts to the inverse CDF of the standard Gaussian distribution at $(1-\overline{\eta}_{i})-$quantile \citep{ben2001lectures}. Observe that, for the fixed parameter $\overline{\eta}_{i}$, the last term in the right-hand side of \eqref{analytic_ref} is a safety margin, which reduces the feasible space of the original problem \eqref{base_problem} to guarantee individual constraint feasibility for $(1-\overline{\eta}_{i})$ realizations of $\xi$. 

If $\mathbb{1}^{\top}\overline{\eta}\leqslant\eta$ holds, the individual chance constraints guarantee joint constraint satisfaction probability $\eta$. Yet, finding the optimal value $\overline{\eta}$ is an NP- hard problem \citep{xie2019optimized}. Section \ref{sec:experiments}, shows that, for small problem instances, the choice $\overline{\eta}=\eta$ results in the desired joint constraint satisfaction while providing a significantly less conservative solution than the sample approximation.

\section{Variance-aware differentially private algorithms}\label{sec:variance_control}

For many systems governed by the solution of problem \eqref{base_problem}, e.g. energy networks, it is important to control the impact of the differentially private solutions on the optimality loss (e.g., extra supply cost) and the variance of the state variables (e.g., supply and flow allocations). This section extends Algorithms \ref{alg:identity} and \ref{alg:linear} to provide a minimal variance solution without affecting the privacy guarantees.  

\paragraph{Minimal variance of optimality loss} 
The chance-constrained problems of Algorithms \ref{alg:identity} and \ref{alg:linear} optimize against the expected value of the cost function. Its solution provides the estimate of the expected optimality loss relative to the solution of problem \eqref{base_problem}. The worst-case outcome of the optimality loss, however, may significantly exceed the expected value. A trade-off between the expected and worst-case outcomes can be attained by controlling the variance of the optimality loss. 

As the value of the cost function of problem \eqref{base_problem} is deterministic, it is sufficient to control the variance of \eqref{cc_problem_v2_obj} to attain the desired result. For a linear cost function, the variance admits a convex expression $\text{Var}\bracketit{c_{1}^{\top}(z + Z\xi)}=\Tr\bracketit{Z^{\top}\text{diag}(c_{1})\text{diag}(c_{1})Z\Sigma}$ in recourse variable $Z$. Therefore, it can be minimized by optimizing,  instead, the following objective function
\begin{align}\label{var_problem_loss}
        \minimize{z,Z\in\mathcal{Q}}\quad& (1-\varphi)\mathbb{E}^{\mathbb{P}_\xi}\bracketit{c_{1}^{\top}(z + Z\xi)} + \varphi\norm{\Sigma^{\nicefrac{1}{2}}Z^{\top}c_{1}}_{2},
\end{align}
which optimizes the trade-off between the expected value and the standard deviation of the cost function for some factor $\varphi\in[0,1]$. 
Thus, varying the factor $\varphi$ establishes a Pareto frontier between the optimality loss and its variance. Since the recourse decision $Z$ is subject to query-specific constraints $\mathcal{Q}$, the results of Theorems \ref{th:identity_query} and \ref{th:affine_transofrmation_query} hold. Finally, the variance of the non-affine cost functions does not permit convex formulations and is not considered in this paper. 

\paragraph{Minimal variance of optimization variables}  The variance of the  optimization solution $z\braceit{\xi}=z+Z\xi$ admits a convex expression $\text{Var}\bracketit{z\braceit{\xi}} = \Tr\bracketit{Z^{\top}\Sigma Z}$ in $Z$. Therefore, it can be controlled by optimizing the recourse decision $Z$ using the following objective function
\begin{align}\label{var_problem_variables}
        \minimize{z,Z\in\mathcal{Q}}\quad& (1-\varphi)\mathbb{E}^{\mathbb{P}_\xi}\bracketit{c(z + Z\xi)} + \varphi\norm{Z\Sigma^{\nicefrac{1}{2}}\mathbb{1}}_{2},
\end{align}
which finds the optimal trade-off between the expected cost and the standard deviation of the optimization variables for some factor $\varphi\in[0,1]$. Since the optimal recourse is still guided by the query-specific constraints, the privacy guarantees provided by Theorems \ref{th:identity_query} and \ref{th:affine_transofrmation_query} are preserved. 

%%%%%%%%%%
% 
%%%%%%%%%%
\section{Experiments}
\label{sec:experiments}
\DefineShortVerb{\#}% # denotes verbatim opening/closing character
\SaveVerb{OP}#OP#
\SaveVerb{PIQ-a}#PIQ-a#
\SaveVerb{PIQ-s}#PIQ-s#
\SaveVerb{PLQ-a}#PSQ-a#
\SaveVerb{PLQ-s}#PSQ-s#

\paragraph{Problem description} 
The proposed framework is applied to the energy resource allocation problem using a set of benchmark networks from \citep{coffrin2018powermodels}. The problem goal is to compute the cost-optimal supply allocations across the network to satisfy nodal demands while respecting the supply and network limits. The problem is described by an undirected graphs $\mathcal{G}(N,E)$ with a set of nodes $N$ and a set of edges $E$, connecting those nodes. The graph typology is represented by the weighted Laplacian matrix $B$ formed from non-negative edge weights $\beta\in\mathbb{R}_{+}^{\setcard{E}}$. The nodal supply  $p\in\mathbb{R}_{+}^{\setcard{N}}$ is allocated in the network to meet nodal demand $d\in\mathbb{R}_{+}^{\setcard{N}}$. The flow along the edges is modeled considering a vector of nodal potentials $\theta\in\mathbb{R}^{\setcard{N}}$, their difference is proportional to the network flows, i.e., the flow in edge $\ell$ amounts to $f_{\ell}(\theta)=\beta_{\ell}(\theta_{\se{\ell}}-\theta_{\re{\ell}}),\forall \ell\in E$, with operators $\se{\ell}$ and $\re{\ell}$ returning the sending and receiving nodes of edge $\ell$, respectively. Finally, the nodal supply incurs costs computed by function $c:\mathbb{R}^{\setcard{N}}\mapsto\mathbb{R}$. This allocation problem gives rise to the following optimization
\begin{subequations}
\label{problem_det}
\begin{align}
    \minimize{p,\theta} \quad & c(p)\\
    \st \quad& B\theta = p - d \label{problem_det_bal}\\  
             &\underline{p} \leqslant p \leqslant \overline{p} \label{problem_det_supply}\\  
             &\underline{f} \leqslant f(\theta) \leqslant \overline{f}. \label{problem_det_flow}
\end{align} 
\end{subequations}
The objective function minimizes the total supply cost, while the equality constraint balances nodal demand, supply, and net flow injection. The inequality constraints respect the minimum and maximum nodal supply and the network flow limits $\underline{p},\overline{p}\in\mathbb{R}_{+}^{\setcard{N}}$, $\underline{f}\in\mathbb{R}_{-}^{\setcard{E}},\overline{f}\in\mathbb{R}_{+}^{\setcard{E}}$. A rearrangement of the terms in \eqref{problem_det_bal}-\eqref{problem_det_flow} makes the problem representable in the form expressed by problem \eqref{base_problem}, thus its chance-constrained counterpart is achieved as detailed in Sections \ref{sec:algorithm} and \ref{sec:ref_and_guarant}.

The experiments concern the identity and sum queries made over the subset of nodal supplies $p$ and make use of the $\ell_{1}-$sensitivity $\Delta_{\alpha}$ of vector $p$ on the two $\alpha-$indistinguishable datasets $d$ and $d'$. Consider the optimal supply allocations $\optimal{p}\;$ and $\optimal{p}'$ obtained, respectively, on datasets $d$ and $d'$.
\begin{proposition}
    $\Delta_{\alpha}=\norm{\optimal{p}-\optimal{p}'}_{1} \leqslant \alpha.$
\end{proposition}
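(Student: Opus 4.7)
The plan is to bound $\norm{\optimal{p}-\optimal{p}'}_{1}$ by combining the conservation law implied by the Laplacian balance \eqref{problem_det_bal} with a sign-consistency argument for the optimal supply response. Since the neighboring datasets $d\sim_{\alpha}d'$ differ at a single index $i$ with $|d_{i}-d_{i}'|\leqslant\alpha$, it suffices to show that the solution map $d\mapsto\optimal{p}(d)$ is non-expansive in the $\ell_{1}$ norm under single-coordinate perturbations.

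First, I would left-multiply \eqref{problem_det_bal} by $\mathbb{1}^{\top}$. Because $B$ is a weighted graph Laplacian, $\mathbb{1}^{\top}B=\mathbb{0}^{\top}$, so the balance collapses to $\mathbb{1}^{\top}\optimal{p}=\mathbb{1}^{\top}d$, and likewise $\mathbb{1}^{\top}\optimal{p}'=\mathbb{1}^{\top}d'$. Subtracting yields $\mathbb{1}^{\top}(\optimal{p}-\optimal{p}')=d_{i}-d_{i}'$, from which $|\mathbb{1}^{\top}(\optimal{p}-\optimal{p}')|\leqslant\alpha$.

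The second step is to promote this signed-sum inequality to an $\ell_{1}$ inequality by arguing that every coordinate of $\optimal{p}-\optimal{p}'$ shares the sign of $d_{i}-d_{i}'$. I would derive this from the KKT stationarity conditions of \eqref{problem_det}: for strictly convex $c$ and a fixed active set of supply and flow constraints, the uniform component of the balance multiplier moves monotonically with $d_{i}$, and the implicit function theorem then delivers a nonnegative Jacobian $\partial\optimal{p}/\partial d_{i}$ whose entries sum to one by step one. Combined, these give $\norm{\optimal{p}-\optimal{p}'}_{1}=|\mathbb{1}^{\top}(\optimal{p}-\optimal{p}')|\leqslant\alpha$.

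The main obstacle is this sign-consistency claim when flow limits become binding, in which case congestion can in principle induce opposing generator movements across the network and inflate $\norm{\optimal{p}-\optimal{p}'}_{1}$ beyond the scalar demand perturbation. I would handle this either by restricting attention to demand changes that preserve the active constraint set (the standard setting of parametric convex programming), or by a direct primal comparison using the witness $p''=\optimal{p}+(d_{i}'-d_{i})e_{i}$ together with $\theta''=\optimal{\theta}$, which is feasible for the primed instance whenever the supply bounds at node $i$ allow it; strict convexity of $c$ would then rule out any other optimum whose $\ell_{1}$ deviation from $\optimal{p}$ exceeds $\alpha$.
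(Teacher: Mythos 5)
Your first step is exactly the paper's proof: left-multiplying the balance constraint \eqref{problem_det_bal} by $\mathbb{1}^{\top}$ and using $\mathbb{1}^{\top}B=\mathbb{0}^{\top}$ to obtain $\mathbb{1}^{\top}(\optimal{p}-\optimal{p}')=\mathbb{1}^{\top}(d-d')$, hence $|\mathbb{1}^{\top}(\optimal{p}-\optimal{p}')|\leqslant\alpha$. The paper stops there and writes $\norm{\optimal{p}-\optimal{p}'}_{1}\leqslant\alpha$ as though it followed immediately; you have correctly identified that this final step requires every coordinate of $\optimal{p}-\optimal{p}'$ to share the same sign, which is the actual content of the claim and is not delivered by the conservation law alone. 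On the diagnosis, your write-up is sharper than the paper's own argument.

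However, neither of your proposed repairs closes the gap. The KKT/monotonicity route needs a fixed active set and a sign-consistent redispatch, and, as you note yourself, binding flow limits can force opposing generator movements, in which case $\norm{\optimal{p}-\optimal{p}'}_{1}$ genuinely exceeds $|\mathbb{1}^{\top}(\optimal{p}-\optimal{p}')|$; restricting to perturbations that preserve the active set is an extra hypothesis the proposition does not state. The primal-witness route is not a valid upper bound on $\norm{\optimal{p}-\optimal{p}'}_{1}$ either: feasibility of $p''=\optimal{p}+(d_{i}'-d_{i})e_{i}$ for the primed instance only bounds the optimal \emph{cost} of that instance, and strict convexity of $c$ gives uniqueness of $\optimal{p}'$, not proximity of $\optimal{p}'$ to any particular feasible point --- the unique optimizer can lie arbitrarily far (in $\ell_{1}$) from $p''$ while still having lower cost. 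So your proof is incomplete at precisely the step you flag as the obstacle; the honest comparison is that the paper's proof leaves the same step unjustified, silently.
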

\begin{proof}
The equality constraint \eqref{problem_det_bal} requires the balance between the total supply and total demand. By construction, $\sum_{i\in N}[B\optimal{\theta}\;]_{i} = 0,$ thus from \eqref{problem_det_bal} it follows
\begin{align*}
    &\sum_{i\in N} \optimal{p}_{i} - \sum_{i\in N} d_{i} = 0 \\
    &\sum_{i\in N} \optimal{p}_{i}^{\prime} - \sum_{i\in N} d_{i}^{\prime} = 0,
\end{align*}
therefore,
$$
\sum_{i\in N}\optimal{p}_{i} - \sum_{i\in N}\optimal{p}_{i}^{\prime} = \sum_{i\in N} d_{i} - \sum_{i\in N} d_{i}^{\prime} \leqslant \alpha,
$$
because the datasets $d$ and $d'$ differ by at most $\alpha$ in one entry, i.e., $\norm{d-d'}_{1}\leqslant\alpha$. 
Therefore, $\Delta_{\alpha}=\norm{\optimal{p}-\optimal{p}'}_{1} \leqslant \alpha$. 
\end{proof}

\paragraph{Experimental setup}
The experiments are organized as follows. For every network instance, the variable limits are fixed, while cost coefficients and nodal demands are i.i.d.~drawn from the following uniform distributions  $c_{1}\sim U(1,3)$, $c_{2}\sim U(\nicefrac{1}{10},\nicefrac{3}{10})$, and $d\sim U(\nicefrac{1}{2},1).$ The results are thus reported for 100 independent simulation runs. The identity and sum queries are made over an arbitrary set of 30\% of nodal supplies, which is sampled at every simulation run. The privacy loss parameter $\varepsilon$ is set to $1$ and the indistinguishability parameter $\alpha$ is set to $0.1$ for identity queries and $0.5$ for sum queries. As $\ell_{1}$-sensitivity $\Delta_{\alpha}$ is bounded by $\alpha$, random perturbations thus obey the Laplace distribution $\text{Lap}(\alpha)$. The feasibility requirements for the joint and individual constraint satisfaction are set uniformly at $\eta=\overline{\eta}=2.5\%$, and the out-of-sample empirical constraint violation probability is obtained for 1000 samples at every simulation run.

\paragraph{Privacy-preserving algorithms} The algorithm abbreviations \verb PIQ \; and \verb PSQ \; are appended by \verb -a \; or \verb -s \; to indicate whether the chance constraints are reformulated, respectively, analytically or by samples (Section \ref{sec:ref_and_guarant}). They are compared with the output perturbation \verb OP \;algorithm, which adds noise to the query answer. The \verb OP \;solution is said to be feasible if problem \eqref{base_problem} returns a feasible solution for the fixed solution of the \verb OP \;algorithm. 

\paragraph{Implementation} The simulations were carried out using the standard PC with Intel Core i5 3.4 GHz processor and 8 GB memory. Solving optimization problems with the analytic reformulation requires less than a few seconds on average, whereas the sample approximation of the chance constraints requires by at most 78 seconds on average. The optimization models were implemented in the Julia Language and the source code can be accessed at \url{https://github.com/wdvorkin/DP_CO_FG}.

\paragraph{Algorithm comparison} The algorithms are compared in terms of their ability to release private queries while satisfying the feasibility requirement. Table \ref{tab:identity} summarizes the results for identity query answers obtained on several networks differing by the number of variables ($n$) and constraints ($|\mathcal{Z}|$). The results indicate that the \verb OP \;algorithm returns private answers that violate the problem constraints at a far greater rate that the one imposed by the requirement $\eta$. Additionally, its performance degrades with the increase of the problem size.  The application of \verb PIQ-a, on the other hand, provides formal guarantees for the individual constraint satisfaction.  These guarantees suffice to attain the desired feasibility requirement for smaller network instances. However, with an increasing network size, the probability of violating multiple constraints increases, and the average \verb PIQ-a \; feasibility performance reduces.  To guarantee the joint constraint satisfaction within the prescribed probability $1-\eta$, the \verb PIQ-s \; uses a sample approximation. While the guarantees are attained, notice that (last four columns of the table) this algorithm generates solutions with larger optimality loss than those generated by \verb PIQ-a , reflecting the discussion in Section \ref{sec:variance_control}. 

Next, Table \ref{tab:linear} reports the results for the  sum queries made over the largest test case \verb 118_ieee. These sum queries return a single aggregated statistic for the subset of selected nodes (first row of the table), or return the sums over 3, 6, or 9 partitions of the selected nodes. A single statistic requires one perturbation, which is accommodated by all algorithms in a feasible manner. With an increasing number of statistics, however, the differences between \verb OP \;and \verb PSQ \;algorithms are clearly observed.

\paragraph{Variance-aware differentially private optimization} 
The last experiments show the ability of the chance-constrained framework to control the variance of the optimization results by means of Equations \eqref{var_problem_loss} and \eqref{var_problem_variables}. 
Due to the inherent dependency between the optimality loss and the cost values, the results are given for different degrees of sparsity ($\bar{c}$) of supply cost among network nodes. Figure \ref{figure_var} illustrates the results for the \verb PSQ-a \;algorithm releasing $9$ sum statistics for the \verb 118_ieee \hspace{0.1cm}case and for the various assignments of the trade-off parameter $\varphi\in[0,1]$. The left plot shows that independently of cost sparsity, the algorithm can produce a differentially private output at zero variance of the optimality loss. The right plot demonstrates the drastic reduction of the overall solution variance $\text{Var}\bracketit{z(\xi)}=\Tr\bracketit{Z^{\top}\Sigma Z}$, almost to the $\Tr[\Sigma]$ total variance of the $9$ random perturbations. Both results, however, require larger conservatism of the solution in terms of optimality loss.

\begin{table}
\centering

\caption{Identity query summary for 100 network data samples} 

\label{tab:identity}

\begin{adjustbox}{max width=0.9\textwidth}
\begin{tabular}{llcccccccccc}
\toprule
\multicolumn{1}{c}{\multirow{3}{*}{Case ID}} & \multirow{3}{*}{$n\times|\mathcal{Z}|$} & \multicolumn{6}{c}{Empirical constraint violation $\mathbb{P}[\hat{z}\notin\mathcal{Z}]$ {[}\%{]}} & \multicolumn{4}{c}{Optimality loss $\Delta c$ {[}\%{]}} \\
\cmidrule(r){3-8} \cmidrule(r){9-12}
\multicolumn{1}{c}{} &  & \multicolumn{2}{c}{ \UseVerb{OP} } & \multicolumn{2}{c}{\UseVerb{PIQ-a}} & \multicolumn{2}{c}{\UseVerb{PIQ-s}} & \multicolumn{2}{c}{\UseVerb{PIQ-a}} & \multicolumn{2}{c}{\UseVerb{PIQ-s}} \\
\cmidrule(r){3-4}\cmidrule(r){5-6}\cmidrule(r){7-8}\cmidrule(r){9-10}\cmidrule(r){11-12}
\multicolumn{1}{c}{} &  & mean & std & mean & std & mean & std & mean & std & mean & std \\
\midrule
\verb 3_lmbd  & 6$\times$17 & 29.7 & 23.07 & 0.64 & 0.34 & 0.27 & 0.31 & 3.42 & 3.55 & 5.72 & 7.64 \\
\verb 5_pjm   & 10$\times$29 & 18.32 & 22.94 & 0.39 & 0.41 & 0.12 & 0.3 & 1.22 & 2.07 & 2.04 & 3.03 \\
\verb 14_ieee & 28$\times$84 & 52.24 & 26.85 & 1.48 & 0.78 & 0.27 & 0.25 & 1.55 & 1.33 & 3.45 & 2.88 \\
\verb 39_epri & 78$\times$211 & 95.56 & 4.69 & 4.86 & 1.32 & 0.49 & 0.35 & 2.17 & 0.81 & 4.7 & 1.68 \\
\verb 57_ieee  & 114$\times$333 & 98.59 & 1.91 & 7.17 & 1.50 & 1.28 & 1.06 & 2.4 & 0.78 & 5.51 & 5.06 \\
\verb 118_ieee & 236$\times$728 & 99.99 & 0.02 & 14.35 & 2.06 & 1.51 & 0.47 & 2.46 & 0.56 & 4.89 & 1.20 \\
\bottomrule
\end{tabular}
\end{adjustbox}
\end{table}

\begin{table}
\centering

\caption{Sum query summary for 100 network data samples} 

\label{tab:linear}

\begin{adjustbox}{max width=0.72\textwidth}
\begin{tabular}{ccccccccccc}
\toprule
\parbox[t]{3mm}{\multirow{3}{*}{\rotatebox[origin=c]{90}{queries \#}}} & \multicolumn{6}{c}{Empirical constraint violation  $\mathbb{P}[\hat{z}\notin\mathcal{Z}]$ {[}\%{]}} & \multicolumn{4}{c}{Optimality loss $\Delta c$ {[}\%{]}} \\
\cmidrule(r){2-7}\cmidrule(r){8-11}
 & \multicolumn{2}{c}{ \UseVerb{OP} } & \multicolumn{2}{c}{\UseVerb{PLQ-a}} & \multicolumn{2}{c}{\UseVerb{PLQ-s}} & \multicolumn{2}{c}{\UseVerb{PLQ-a}} & \multicolumn{2}{c}{\UseVerb{PLQ-s}} \\
\cmidrule(r){2-3}\cmidrule(r){4-5}\cmidrule(r){6-7}\cmidrule(r){8-9}\cmidrule(r){10-11}
 & mean & std & mean & std & mean & std & mean & std & mean & std \\
\midrule
1 & 0.00 & 0.00 & 0.00 & 0.00 & 0.00 & 0.00 & 0.00 & 0.00 & 0.00 & 0.00 \\
3 & 0.62 & 0.03 & 0.00 & 0.00 & 0.00 & 0.00 & 0.07 & 0.04 & 0.12 & 0.11 \\
6 & 16.48 & 20.2 & 1.39 & 0.84 & 0.2 & 0.24 & 0.58 & 0.32 & 1.81 & 0.82 \\
9 & 59.83 & 23.6 & 4.94 & 1.11 & 1.00 & 0.92 & 2.36 & 1.01 & 13.35 & 7.31 \\
\bottomrule
\end{tabular}
\end{adjustbox}
\end{table}

\begin{figure}[!h]
\center
    \begin{tikzpicture}[thick,scale=1,font=\scriptsize]
    \begin{axis}[
    font=\scriptsize,
    xlabel near ticks, 
    ylabel near ticks, 
    ytick={0, 2.5, 5, 7.5},
    y tick label style={/pgf/number format/.cd,fixed,fixed zerofill,precision=1,/tikz/.cd},
    ylabel={$\text{Var}\bracketit{\loss}$}, 
    xlabel={$\mathbb{E}\bracketit{\loss}$}, 
    label style={font=\footnotesize},
    tick label style={font=\scriptsize}, 
    legend pos = south west, 
    legend style={font=\tiny, 
    legend cell align={left},fill=none,xshift=4.55cm,yshift=0cm}, 
    width=5.6cm, height=3.5cm]
        \addplot [black, line width=0.15mm, mark=pentagon*,every mark/.append style={solid, fill=gray}, thick,mark size=1.5pt,
            mark repeat*=20, smooth] 
        table [x index = 1, y index = 2] {data_plots/summary_loss.dat};
        \addlegendentry{$\overline{c}=10$};
        \addplot [black, line width=0.15mm, mark=*,every mark/.append style={solid, fill=gray}, thick,mark size=1.5pt,
            mark repeat*=20, smooth] 
        table [x index = 3, y index = 4] {data_plots/summary_loss.dat};
        \addlegendentry{$\overline{c}=20$};
        \addplot [black, line width=0.15mm, mark=otimes*,every mark/.append style={solid, fill=gray}, thick,mark size=1.5pt,
            mark repeat*=20, smooth] 
        table [x index = 5, y index = 6] {data_plots/summary_loss.dat};
        \addlegendentry{$\overline{c}=30$};
        \addplot [black, line width=0.15mm, mark=triangle*,every mark/.append style={solid, fill=gray}, thick,mark size=1.5pt,
            mark repeat*=20, smooth] 
        table [x index = 7, y index = 8] {data_plots/summary_loss.dat};
        \addlegendentry{$\overline{c}=40$};
        \addplot [black, line width=0.15mm, mark=square*,every mark/.append style={solid, fill=gray}, thick,mark size=1.5pt,
            mark repeat*=20, smooth] 
        table [x index = 9, y index = 10] {data_plots/summary_loss.dat};
        \addlegendentry{$\overline{c}=50$};
        \addplot[->,>=stealth, gray] coordinates
           {(25,7.5) (29,1.5)} node [midway, above, sloped] {$\varphi$ increases};
    \end{axis}
        \begin{semilogyaxis}[
    xmin=16,xmax=35,
    xshift = 8cm,
    font=\scriptsize,
    xlabel near ticks, 
    ylabel near ticks, 
    ytick={0,10,100},
    x tick label style={/pgf/number format/1000 sep=\,},
    ylabel={$\text{Var}\bracketit{z(\xi)}$}, 
    xlabel={$\mathbb{E}\bracketit{\loss}$}, 
    label style={font=\footnotesize},
    tick label style={font=\scriptsize}, 
    legend pos = south west, 
    legend style={draw=none, font=\scriptsize, 
    legend cell align={left},fill=none}, 
    width=5.6cm, height=3.5cm]
        \addplot [black, line width=0.15mm, mark=pentagon*,every mark/.append style={solid, fill=gray}, thick,mark size=1.5pt,
            mark repeat*=20, smooth] 
        table [x index = 1, y index = 2] {data_plots/summary_var_sol.dat};
        \addplot [black, line width=0.15mm, mark=*,every mark/.append style={solid, fill=gray}, thick,mark size=1.5pt,
            mark repeat*=20, smooth] 
        table [x index = 3, y index = 4] {data_plots/summary_var_sol.dat};
        \addplot [black, line width=0.15mm, mark=otimes*,every mark/.append style={solid, fill=gray}, thick,mark size=1.5pt,
            mark repeat*=20, smooth] 
        table [x index = 5, y index = 6] {data_plots/summary_var_sol.dat};
        \addplot [black, line width=0.15mm, mark=triangle*,every mark/.append style={solid, fill=gray}, thick,mark size=1.5pt,
            mark repeat*=20, smooth] 
        table [x index = 7, y index = 8] {data_plots/summary_var_sol.dat};
        \addplot [black, line width=0.15mm, mark=square*,every mark/.append style={solid, fill=gray}, thick,mark size=1.5pt,
            mark repeat*=20, smooth] 
        table [x index = 9, y index = 10] {data_plots/summary_var_sol.dat};
        \addplot[->,>=stealth, gray] coordinates
           {(23,120) (26,20)} node [midway, above, sloped] {$\varphi$ increases};
        \addplot[red,dashed] coordinates
           {(16,5) (35,5)} node [midway, above, sloped, xshift=1.0cm] {\scriptsize $\Tr[\Sigma]=4.5$};
    \end{semilogyaxis}
    \end{tikzpicture}
\caption{Trade-offs: expected value of optimality loss vs.~variance (left) and vs.~variance of optimization solution (right). The results are given for $c_{1} \!\sim\! U[1,\overline{c}]$ and averaged over 100 runs.}
% \caption{The trade-off between the variance and expected value of optimality loss (left plot) and the trade-off between the variance of optimization solution and expected optimality loss (right plot). The results are given for $c_{1}\sim U[1,\overline{c}]$ and averaged over 100 simulation runs}
\label{figure_var}
\end{figure}
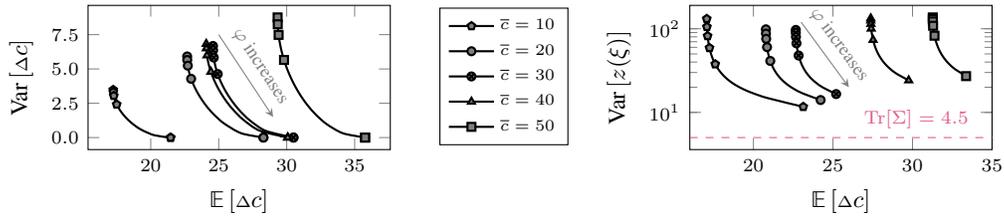

\section{Related work}
\label{sec:related_work}
There is a large body of work on differentially private algorithms for convex optimization in the context of empirical risk minimization (ERM) problems. Output perturbation algorithms \citep{chaudhuri2009privacy,rubinstein2012learning} focus on adding the noise to the optimization results. Objective perturbation algorithms \citep{chaudhuri:2011} perturb the optimization objective and perform well for smooth loss functions. Exponential sampling algorithms \citep{mcsherry2007mechanism,bassily2014private} rely on an evaluation function to select a candidate output, while achieving $(\varepsilon, 0)$-differential privacy that may be difficult to implement due to the exponential nature of evaluation function. Finally, noisy stochastic gradient descent (SGD) algorithms \citep{abadi2016deep,song2013stochastic,bassily2014private,wang2015privacy} provide a privacy-preserving version of SGD that can be combined with accountant methods to provide tight bounds. All these algorithms, however, are meant for a particular class of \emph{unconstrained} or \emph{regularized} convex optimization problems and do not focus on reporting solutions that must satisfy problem constraints. 

The contributions on differentially private \emph{constrained} convex optimization for generic decision-making problems are much more sparse. \citet{gupta2010differentially} studied differential privacy in combinatorial optimization problems and derived information-theoretic bounds on the task utility. \citet{hsu2014privately} proposed to solve linear programs privately using a differentially private variant of the multiplicative weights mechanism. \citet{han2014differentially} focused on a particular class of convex optimization problems whose objective function is piecewise affine, with the possibility of including linear inequality constraints. 
\citet{Fioretto:ArXiv20a} proposed a private data-release mechanism relying on projections to restore the feasibility of the violated constraints due to input perturbation. 
% Due to the use of input perturbation on the sensitive data, this approach modifies the original feasible space and the projections are performed in this modified feasible space.  
Finally, \citet{munozprivate} developed a differentially private algorithm for a class of linear programs that solely include the inequality constraints whose right-hand side contains sensitive data. The work relies on the input perturbation of the inequality right-hand sides to achive $(\varepsilon, \delta)-$differential privacy.

There are also differential privacy proposals for the distributed convex optimization. A privacy-preserving version of the alternating direction method of multipliers \citep{boyd2011distributed} has been studied in the context of the unconstrained ERM problem  \citep{zhang2016dynamic,ding2019optimal} and constrained energy resource allocation problem \citep{dvorkin2019differentially}. \citet{han2016differentially} proposed a private distributed projected gradient descent algorithm for constrained convex optimization problems. This collection of work, however, minimizes the privacy leakage by acting on the information exchanged by agents during the coordination process and does not provide privacy guarantees for the release of optimization solution. 

\section{Conclusion}

The paper proposed a novel framework to release privacy-preserving solutions of constrained convex optimization problems that contain complex feasibility constraints. The framework relies on a combination of differential privacy and stochastic optimization theory and provides the foundations for two algorithms answering privacy-preserving identity and sum queries over the optimization solutions. The feasibility guarantees were studied for both individual and joint constraint satisfaction and the paper examined the trade-off between the expected and the worst-case errors by controlling the variance of the solutions and the optimality loss. Finally, the proposed framework was shown to outperform standard output perturbation algorithms on several energy benchmark networks.

\small
\bibliographystyle{abbrvnat.bst}
\bibliography{lib.bib}

\normalsize
\appendix

\section{Proof of Theorem \ref{th:identity_query}}
\begin{proof}
Without loss of generality, consider that the identity query $Iz(\xi)=I\tilde{z}+IZ\xi$ requires releasing first $k$ items of $z(\xi)$, such that the diagonal matrix $I$ can be described as
\begin{align*}
I = 
\begin{bmatrix}
\text{diag}(\mathbb{1})_{k\times k} & \\
 & \text{diag}(\mathbb{0})_{n-k\times n-k}
\end{bmatrix},
\end{align*}
the perturbation vector $\xi$ as
\begin{align*}
\xi = 
\begin{bmatrix}
\xi_{1}, \dots, \xi_{k}, \mathbb{0}_{n-k}^{\top} 
\end{bmatrix}^{\top},
\end{align*}
and an arbitrary identity outcome $O$ as
\begin{align*}
O = 
\begin{bmatrix}
O_{1}, \dots, O_{k}, \mathbb{0}_{n-k \times n-k}^{\top}
\end{bmatrix}^{\top}. 
\end{align*}
Denote the optimal solution of the chance-constrained problem \eqref{cc_problem_v2}  by $\optimal{\tilde{z}}$ and $\optimal{Z}$\;. 
It needs to be shown that the ratio of probabilities 
that the algorithm returns the same outcome $O$ on two $\alpha-$indistinguishable input datasets $d$ and  $d'$ is bounded by a constant $\text{exp}\braceit{\varepsilon}$:
\begin{align*}
    \mathbb{P}\bracketit{I\braceit{\optimal{\tilde{z}}(d) + \optimal{Z}(d)\xi} = O} /\;
    \mathbb{P}\bracketit{I\braceit{\optimal{\tilde{z}}(d') + \optimal{Z}(d')\xi} = O}
    \leqslant \text{exp}\braceit{\varepsilon}.
\end{align*}
It follows that: 
\begin{align*}
    &\mathbb{P}\bracketit{I\optimal{\tilde{z}}(d) + I\optimal{Z}(d)\xi = O} /\;
    \mathbb{P}\bracketit{I\optimal{\tilde{z}}(d') + I\optimal{Z}(d')\xi = O}
    \\
    &\overset{(\text{i})}{=}
    \frac{\mathbb{P}\bracketit{\begin{bmatrix}\optimal{\tilde{z}}_{1}(d) \\ \vdots \\ \optimal{\tilde{z}}_{k}(d) \\ \mathbb{0}_{n-k}\end{bmatrix} + \begin{bmatrix}
    \xi_{1} \\ \vdots \\ \xi_{k} \\ \mathbb{0}_{n-k} 
    \end{bmatrix} = 
    \begin{bmatrix}
    O_{1} \\ \vdots \\ O_{k} \\ \mathbb{0}_{n-k}
    \end{bmatrix}}}
    {\mathbb{P}\bracketit{\begin{bmatrix}\optimal{\tilde{z}}_{1}(d') \\ \vdots \\ \optimal{\tilde{z}}_{k}(d') \\ \mathbb{0}_{n-k}\end{bmatrix} + \begin{bmatrix}
    \xi_{1} \\ \vdots \\ \xi_{k} \\ \mathbb{0}_{n-k} 
    \end{bmatrix} = 
    \begin{bmatrix}
    O_{1} \\ \vdots \\ O_{k} \\ \mathbb{0}_{n-k}
    \end{bmatrix}
    }} 
    \overset{(\text{ii})}{=} 
    \frac{
    \mathbb{P}\bracketit{
    \begin{bmatrix}
    \xi_{1} \\ \vdots \\ \xi_{k}
    \end{bmatrix}
    =
    \begin{bmatrix}
    O_{1} \\ \vdots \\ O_{k}
    \end{bmatrix}
    -
    \begin{bmatrix}\optimal{\tilde{z}}_{1}(d) \\ \vdots \\ \optimal{\tilde{z}}_{k}(d) \end{bmatrix}
    }}
    {
    \mathbb{P}\bracketit{
    \begin{bmatrix}
    \xi_{1} \\ \vdots \\ \xi_{k}
    \end{bmatrix}
    =
    \begin{bmatrix}
    O_{1} \\ \vdots \\ O_{k}
    \end{bmatrix}
    -
    \begin{bmatrix}\optimal{\tilde{z}}_{1}(d') \\ \vdots \\ \optimal{\tilde{z}}_{k}(d') \end{bmatrix}
    }}
    \\
    &\overset{(\text{iii})}{=}
    \frac{\prod_{i=1}^{k}\text{exp}\braceit{-\frac{\varepsilon\norm{O_{i}-\optimal{\tilde{z}}_{i}(d)}_{1}}{\Delta_{\alpha}}}}
    {\prod_{i=1}^{k}\text{exp}\braceit{-\frac{\varepsilon\norm{O_{i}-\optimal{\tilde{z}}_{i}(d')}_{1}}{\Delta_{\alpha}}}}
    =
    \prod_{i=1}^{k}\text{exp}\braceit{\frac{\varepsilon\norm{O_{i}-\optimal{\tilde{z}}_{i}(d')}_{1} - \varepsilon\norm{O_{i}-\optimal{\tilde{z}}_{i}(d)}_{1}}{\Delta_{\alpha}}} 
    \\
    &\overset{(\text{iv})}{\leqslant}
    \prod_{i=1}^{k}\text{exp}\braceit{\frac{\varepsilon\norm{\optimal{\tilde{z}}_{i}(d) - \optimal{\tilde{z}}_{i}(d')}_{1}}{\Delta_{\alpha}}}
    =
    \text{exp}\braceit{\frac{\varepsilon\norm{\optimal{\tilde{z}}(d) - \optimal{\tilde{z}}(d')}_{1}}{\Delta_{\alpha}}}
    \overset{(\text{v})}{\leqslant}\text{exp}\braceit{\varepsilon},
\end{align*}
where (i) is obtained from the primal feasibility condition $Z\in\mathcal{Q}(I)$, which enforces independence between the query random component and the sensitive data (see Definition \ref{def:identity_query}), (ii) comes from rearranging the terms and removing zero entries, (iii) is due to the definition of the probability density function of the Laplace distribution, (iv) follows the reverse inequality of norms, and (v) is from the definition of $\ell_{1}-$sensitivity on $\alpha-$indistinguishable input datasets.
\end{proof}

\section{Proof of Theorem \ref{th:affine_transofrmation_query}}
\begin{proof}
Without loss of generality, consider that the sum query $Sz(\xi)=S(\tilde{z} + Z\xi)$ requires releasing $p$ amount of sum statistics over non-intersecting subsets of $z(\xi)$. We thus need to show that the ratio of probabilities that the algorithm returns the same outcome $O\in\mathbb{R}^{p}$, i.e., 
\begin{align*}
    \mathbb{P}\bracketit{S(\optimal{\tilde{z}}(d)+\optimal{Z}(d))\xi = O}/\;\mathbb{P}\bracketit{S(\optimal{\tilde{z}}(d')+\optimal{Z}(d'))\xi = O}\leqslant\text{exp}(\varepsilon),
\end{align*}
is bounded by a constant $\text{exp}(\varepsilon)$, where $S\in\mathbb{R}^{p\times n}$ and $\xi\in\mathbb{R}^{p}$ as in Defintion \ref{def:sum_query}. By denoting the optimal solution of the chance-constrained problem \eqref{cc_problem_v2}  by $\optimal{\tilde{z}}$ and $\optimal{Z}\;$, this ratio writes as 
\begin{align*}
    &\frac{\mathbb{P}\bracketit{S\optimal{\tilde{z}}(d)+S\optimal{Z}(d)\xi = O}}{\mathbb{P}\bracketit{S\optimal{\tilde{z}}(d')+S\optimal{Z}(d')\xi = O}} 
    \overset{(\text{i})}{=}
    \frac{\mathbb{P}\bracketit{S\optimal{\tilde{z}}(d)+\xi = O}}{\mathbb{P}\bracketit{S\optimal{\tilde{z}}(d')+\xi = O}}
    = 
    \frac{\mathbb{P}\bracketit{\xi = O - S\optimal{\tilde{z}}(d)}}{\mathbb{P}\bracketit{\xi = O - S\optimal{\tilde{z}}(d')}}
    \\
    &\overset{(\text{ii})}{=} 
    \frac{\prod_{i=1}^{p}\text{exp}\braceit{-\frac{\varepsilon\norm{O_{i} - [S\optimal{\tilde{z}}\;(d)]_{i}}_{1}}{\Delta_{\alpha}}}}
    {\prod_{i=1}^{p}\text{exp}\braceit{-\frac{\varepsilon\norm{O_{i} - [S\optimal{\tilde{z}}\;(d')]_{i}}_{1}}{\Delta_{\alpha}}}}
    = 
    \prod_{i=1}^{p}\text{exp}\braceit{\frac{\varepsilon\norm{O_{i} - [S\optimal{\tilde{z}}\;(d')]_{i}}_{1}- \varepsilon\norm{O_{i} - [S\optimal{\tilde{z}}\;(d)]_{i}}_{1}}{\Delta_{\alpha}}} 
    \\
    &\overset{(\text{iii})}{\leqslant}
    \prod_{i=1}^{p}\text{exp}\braceit{\frac{\varepsilon\norm{[S\optimal{\tilde{z}}\;(d)]_{i}-[S\optimal{\tilde{z}}\;(d')]_{i}}_{1}}{\Delta_{\alpha}}} 
    =
    \text{exp}\braceit{\frac{\varepsilon\norm{S\optimal{\tilde{z}}\;(d)-S\optimal{\tilde{z}}\;(d')}_{1}}{\Delta_{\alpha}}} 
    \overset{(\text{iv})}{\leqslant} 
    \text{exp}\braceit{\varepsilon},
\end{align*}
where (i) follows from the primal feasibility condition $Z\in\mathcal{Q}(S)$, which requires the random component of the sum query to be independent from the data (see Definition \ref{def:sum_query}), (ii) is due to the definition of the probability density function of the Laplace distribution, (iii) follows from the reserve inequality of norms, and (iv) is from the $\ell_{1}-$sensitivity of the sum query, which is identical to the $\ell_{1}-$sensitivity of the identity query. 
\end{proof}

\section{Proof of Theorem \ref{th:gaussian_mech}}
Similarly to the proofs of Theorems 2 and 3, the random components of the identity and linear queries can be shown to be independent from a datasets $d$ and $d'$ using the query specific feasibility conditions $\mathcal{Q}$. The reminder of the proof can be obtained by following the same steps of the proof in \citep[Appendix A]{Dwork:06}, using notation $f(d)=I\optimal{\tilde{z}}(d)$ for the identity query and $f(d)=S\optimal{\tilde{z}}(d)$ for the sum query, where $f(\cdot)$ is the function of interest in \citep[Appendix A]{Dwork:06}. 

\section{Proof of Theorem \ref{th:composition}}\label{composition_algorithms}

\begin{figure}[tp]
\centering
\begin{minipage}{0.47\textwidth}
\begin{algorithm}[H]
  {\bf Input:} $d,\Delta_\alpha, \varepsilon, \eta, \mu, I$\\
  $(\optimal{\tilde{z}}\;, \optimal{Z}\;) \gets $
  Solve \eqref{cc_problem_v2} for $Z\in\mathcal{Q}(I)$
  and $\xi \sim \text{Lap}(T\Delta_\alpha/\varepsilon)$
  \\
  \For{$i=1,\ldots,T = \lceil\frac{\log(\mu)}{\log(\eta)}\rceil$}
  {
    $\hat{\xi} \gets$ Sample from $\text{Lap}
      (\nicefrac{T\Delta_\alpha}{\varepsilon})^{n}$\\
    Compute solution to \eqref{base_problem} as $\hat{z}=\optimal{\tilde{z}}\;+\optimal{Z}\hat{\xi}$\\
    \If{$A\hat{z}\leqslant b\; \lor\; i=T$}{
       \bf{Release:} $I\hat{z}$
     }      
  }
  \caption{Private identity query (PIQ)}
  \label{alg:identity2}
\end{algorithm}
\end{minipage}
\hspace{0.5cm}\begin{minipage}{0.47\textwidth}
\begin{algorithm}[H]
  {\bf Input:} $d,\Delta_\alpha, \varepsilon, \eta,  \mu, S$\\
  $(\optimal{\tilde{z}}\;, \optimal{Z}\;) \gets $
  Solve \eqref{cc_problem_v2} for $Z\in\mathcal{Q}(S)$
  and $\xi \sim \text{Lap}(T\Delta_\alpha \varepsilon)$
  \\
  \For{$i=1,\ldots,T = \lceil\frac{\log(\mu)}{\log(\eta)}\rceil$}
  {
    $\hat{\xi} \gets$ Sample from $\text{Lap}
      (\nicefrac{T\Delta_\alpha}{\varepsilon})^{p}$\\
    Compute solution to \eqref{base_problem} as $\hat{z}=\optimal{\tilde{z}}\;+\optimal{Z}\hat{\xi}$\\
    \If{$A\hat{z}\leqslant b\; \lor\; i = T$}{
       \bf{Release:} $S\hat{z}$
     }      
  }
  \caption{Private sum query (PSQ)}
  \label{alg:linear2}
\end{algorithm}
\end{minipage}
\end{figure}
%%%%%%%%%%%%
% Given feasibility requirement $\eta$, privacy parameter $\varepsilon/T$, and value $0 < \mu < 1$, the \emph{iterative} variants of Algorithms \ref{alg:identity} and \ref{alg:linear} return an $\varepsilon$-differentially private solution that is feasible with probability at least $1-\mu$ within $T = \lceil \frac{\log(\mu)}{\log(\eta)} \rceil$ steps.

The iterative versions of the Algorithms \ref{alg:identity} and \ref{alg:linear} are provided, respectively, in Algorithms \ref{alg:identity2} and \ref{alg:linear2}.

\begin{proof}
Consider the optimal solution $(\optimal{\tilde{z}}{}, \optimal{Z}\;)$ returned by the chance constraint problem (line 2) and recall that  the sampling process (lines 4--5) generates a $(1-\eta)$-feasible solution $\hat{z}$. 

The new algorithms, illustrated in Algorighms \ref{alg:identity2} and \ref{alg:linear2}, alternate this step with a constraint satisfaction test (line 6) for a maximum number of number $T$ of times with the goal of generating a solution that satisfies the problem constraints with probability at least $1-\mu$. Recall that the constraint satisfaction test, performed in line 6, can be achieved at no extra privacy loss (see Section \ref{sec:algorithm} for details).

The repetition of such process can be seen as a sequence of independent Bernulli trials, each with probability $1-\eta$ of success (i.e., $\hat{z}$ satisfies the problem constraints) and probability $\eta$ of failure (i.e., $\hat{z}$ violates the problem constraints). 
Let $\text{FAIL}$ be the discrete random variable describing the number of unsuccessful trials prior to the first success. Thus, $\text{FAIL}$ is described by a Geometric random variable with probability $(1-\eta)$.
Formally, the goal is described by the following problem:
\[
T\triangleq\arg\min_T \mathbb{P}(\text{FAIL} \geqslant T) \leqslant \mu, 
\]
requiring that the first success is seen after $T$ trials with probability no larger than $\mu$. Using a Geometric distribution of order $T$, it follows that:
\begin{align*}
\mathbb{P}(\text{FAIL} \geqslant T) &= 1 - \mathbb{P}(\text{FAIL} < T) \\
  &= 1 - (1-\eta) \sum_{i=1}^{T-1} \eta^i \\
  &= 1 - (1-\eta) \frac{1-\eta^T}{1-\eta} = \eta^T
\end{align*}
Thus, the solution to the minimizer above is for 
\(
T = \left\lceil \frac{\log(\mu)}{\log(\eta)} \right\rceil.
\)
\end{proof}

\endgroup
\end{document}